\documentclass[conference]{IEEEtran}
\usepackage{amsthm}
\newtheorem{definition}{Definition}
\usepackage{hyperref}
\usepackage{booktabs}
\usepackage{cite}
\usepackage[ruled,vlined,linesnumbered]{algorithm2e}
\usepackage{enumitem}
\usepackage{graphicx}
\usepackage{subcaption}
\usepackage{textcomp}
\usepackage{xcolor}
\usepackage{diagbox}
\usepackage{multirow}
\usepackage{xurl}
\usepackage{comment}
\usepackage{balance}
\usepackage{tcolorbox}

\usepackage[normalem]{ulem}

\newcommand{\bheading}[1]{{\vspace{4pt}\noindent{\textbf{#1}}}}
\newcommand{\iheading}[1]{{\vspace{4pt}\noindent{\textit{#1}}}}

\newcommand{\ubpara}[2][]{{\vspace{1pt}\noindent{\textbf{\uline{#1} #2}}}}

\newcommand{\secref}[1]{\mbox{Sec.~\ref{#1}}\xspace}
\newcommand{\ssecref}[1]{\mbox{\S\ref{#1}}\xspace}

\newcommand{\figref}[1]{\mbox{Fig.~\ref{#1}}}
\newcommand{\tabref}[1]{\mbox{Table~\ref{#1}}}

\newcommand{\algref}[1]{\mbox{Algorithm~\ref{#1}}}
\newcommand{\ignore}[1]{}

\newcounter{note}[section]

\newcommand{\sysname}{\texttt{TeeDAO}\xspace}
\newcommand{\sysnameSGX}{\texttt{TeeDAO-SGX}\xspace}
\newcommand{\sysnameTDX}{\texttt{TeeDAO-TDX}\xspace}
\newcommand{\sysnameCSV}{\texttt{TeeDAO-CSV}\xspace}

\newcounter{packednmbr}

\newenvironment{packeditemize}{
\begin{list}{$\bullet$}{
\setlength{\labelwidth}{0pt}
\setlength{\itemsep}{2pt}
\setlength{\leftmargin}{\labelwidth}
\addtolength{\leftmargin}{\labelsep}
\setlength{\parindent}{0pt}
\setlength{\listparindent}{\parindent}
\setlength{\parsep}{1pt}
\setlength{\topsep}{1pt}}}{\end{list}}

\newtheorem{theorem}{Theorem}
\newtheorem{lemma}{Lemma}
\newtheorem{corollary}{Corollary}

\let\oldnl\nl \newcommand{\nonl}{\renewcommand{\nl}{\let\nl\oldnl}}

\begin{document}
\date{}
\title{\sysname: A Decentralized Autonomous Organization for Heterogeneous TEEs}

\author{\IEEEauthorblockN{Pinshen Xu\IEEEauthorrefmark{1},
Wentao Dong\IEEEauthorrefmark{1},
Guoxing Chen\IEEEauthorrefmark{2}, 
Jianyu Niu\IEEEauthorrefmark{1},
Cong Wang\IEEEauthorrefmark{1}, and
Yinqian Zhang\IEEEauthorrefmark{3}}
\IEEEauthorblockA{\IEEEauthorrefmark{1}City University of Hong Kong}
\IEEEauthorblockA{\IEEEauthorrefmark{2}Shanghai Jiao Tong University}
\IEEEauthorblockA{\IEEEauthorrefmark{3}Southern University of Science and Technology}}

\maketitle
\begin{abstract}
Trusted Execution Environments (TEEs) have emerged as a critical technology for safeguarding sensitive data and ensuring code integrity in modern computing systems. However, relying on a single TEE implementation makes systems vulnerable to a central point of attack. Building distributed-trust systems leveraging heterogeneous TEEs helps disperse trust but still faces threats from centralized management and adaptive mobile adversaries. To address these challenges, this paper introduces \sysname, a novel three-layer framework that automatically organizes multiple heterogeneous TEE instances and provides unified interfaces to support diverse applications, while ensuring long-term guarantees of availability, integrity, and confidentiality. \sysname couples BFT-ordered governance with heterogeneity-aware Distributed Proactive Secret Sharing (DPSS) and Secure Multi-Party Computation (MPC) so that attestation-driven committee changes are consistently reflected in secret recovery, resharing, and computation across a dynamic committee of heterogeneous TEEs.
We implement a prototype of \sysname, integrating COBRA’s DPSS scheme with the HotStuff BFT consensus protocol, and adapt it for Intel SGX, TDX, and Hygon CSV. Evaluations demonstrate that \sysname achieves up to $1.8\times$ higher key-value store throughput in a large cluster with 61 nodes compared to state-of-the-art systems, efficient autonomous management, and minimal computation overhead ($<18\%$) for multi-party computation tasks.

\end{abstract}
 
\section{Introduction} \label{Introduction}
Trusted Execution Environments (TEEs) have emerged as the cornerstone of modern cloud security, enabling the paradigm of Confidential Computing. 
By leveraging hardware-enforced isolation and remote attestation, TEEs allow users to process sensitive data without trusting the underlying operating system or privileged software. 
This hardware-rooted trust model has been widely adopted: processor vendors have rolled out diverse implementations, ranging from process-based Intel SGX~\cite{IntelSoftwareGuard} to VM-based architectures like Intel TDX~\cite{IntelTrustDomain}, ARM CCA~\cite{ltdArmConfidentialCompute}, AMD SEV~\cite{AMDSecureEncrypted}, and Hygon CSV~\cite{DocumentationX86Hygonsecurevirtualizationrst}.
Major cloud providers (e.g., Azure~\cite{ju-shimAzureConfidentialComputing}, AWS~\cite{AWSNitroEnclaves}, Google Cloud~\cite{GoogleConfidentialComputing}) also expose these capabilities as standard managed services.

Despite various proposals using TEEs to protect security-critical services, e.g., confidential analytics~\cite{DBLP:conf/sp/SchusterCFGPMR15, DBLP:conf/sp/PriebeVC18, ohrimenkoObliviousMultiPartyMachine2016}, key management~\cite{DBLP:conf/uss/HerwigGL20, DBLP:conf/ccs/FischVBG17}, and digital-asset custody~\cite{chengEkidenPlatformConfidentialitypreserving2019, lindTeechainSecurePayment2019, zhangTownCrierAuthenticated2016, mateticBITEBitcoinLightweight2019, mecury, TeeRollup},
a single TEE implementation remains questionable in practice~\cite{chenSgxPectreStealingIntel2019, bulckForeshadowExtractingKeys2018, liSystematicLookCiphertext2022, buhrenInsecureProvenUpdated2019, buhrenOneGlitchRule2021, DBLP:conf/uss/LippGSMM16, hahnelHighResolutionSideChannels2017, leeInferringFinegrainedControl2017, leeHackingDarknessReturnoriented2017,  bulckTellingYourSecrets2017, DBLP:conf/ccs/WilkeS024, DBLP:conf/uss/RauscherWW0G25}. 
The major concerns stem from the \textit{central point of attacks}~\cite{emmadautermanReflectionsTrustingDistributed2022}:
Once the underlying hardware trust root is compromised, isolation can fail, and long-lived secrets may be exposed. 
For example, microarchitectural exploits (e.g., transient execution attacks Spectre/Meltdown~\cite{kocherSpectreAttacksExploiting2019, lippMeltdownReadingKernel2018}) and demonstrated key-exfiltration attacks against commodity TEEs~\cite{chenSgxPectreStealingIntel2019, bulckForeshadowExtractingKeys2018, buhrenInsecureProvenUpdated2019, buhrenOneGlitchRule2021} illustrate that such breaks are realistic. Although vendors actively patch these flaws, the history of vulnerabilities suggests an endless arms race, where new zero-day exploits inevitably emerge~\cite{jauernigTrustedExecutionEnvironments2020, liSoKUnderstandingDesign2024}. Beyond breaches of enclave isolation, a single TEE also faces availability threats (e.g., denial-of-service~\cite{woodDenialServiceSensor2002a}) and side-channel attacks~\cite{zhouSideChannelAttacksTen2005}. 
As a result, a single-TEE system can hardly offer durable guarantees of availability, integrity, or confidentiality for security-critical services. 

To address this \textit{central point of attack}, recent research has turned to distributing trust across heterogeneous TEEs. 
A seminal example is SVR3~\cite{connellSecretKeyRecovery2024}, which advances secure key recovery for end-to-end encrypted messaging as deployed in the Signal platform~\cite{SignalappSecureValueRecovery22025}. SVR3 distributes user secret shares across diverse enclave implementations (e.g., SGX, SEV, Nitro) and cloud providers, ensuring that the compromise of a subset of hardware platforms or providers is insufficient to compromise data confidentiality. 
This approach significantly elevates the bar for adversaries to subvert multiple, distinct roots of trust simultaneously. (See more discussion in \secref{bg_TEE}.)

While SVR3 validates the utility of heterogeneous TEEs in static deployments, our work targets a more challenging domain: \emph{long-running} confidential services ranging from databases~\cite{DBLP:conf/sp/SchusterCFGPMR15} and analytics engines~\cite{DBLP:conf/sp/PriebeVC18}, TLS/CDN infrastructure~\cite{DBLP:conf/uss/HerwigGL20}, and blockchain-oriented oracles and exchange applications~\cite{zhangTownCrierAuthenticated2016, DBLP:conf/ccs/BentovJ0BDJ19}. In this setting, security cannot be established once at deployment; it must be maintained over time as the committee reconfigures and as TEEs evolve through disclosure, patching, and revocation. Specifically, the system should automatically remove compromised TEEs and incorporate new ones, all while keeping users' secrets safe. The core difficulty is that these tasks cannot be solved independently. Attestation can indicate which TEE instances remain eligible to serve, but they do not protect secrets that have already been exposed by formerly compromised enclaves. Conversely, DPSS and MPC can refresh and compute over secret shares, but they do not determine which node should receive those shares after disclosure, patching, or replacement. Thus, a long-running heterogeneous TEE service must make trust evidence, committee membership, and secret-share state evolve together. This dynamic environment imposes two distinct requirements compared to the static deployments:

\begin{packeditemize}
    \item \emph{Evidence-driven autonomous management:} Long-running services must repeatedly \emph{reconfigure} in response to maintenance and vulnerability disclosures, and each operational change must be authorized by verifiable evidence and strictly governed by protocol rules. This ensures that the compromise of a single administrative authority does not degenerate into a single point of failure. 

    \item \emph{Proactive secret maintenance:} The system must withstand \textit{adaptive mobile adversaries}~\cite{herzbergProactiveSecretSharing1995, canettiAdaptiveSecurityThreshold1999, vassantlalCOBRADynamicProactive2022}, who may compromise different TEE \textit{implementations}\footnote{We use the term \textit{implementation} to refer to a specific version of a particular type of TEE hardware.} over time. This necessitates mechanisms to periodically refresh secret shares and effectively invalidate the adversary's accumulated knowledge across configurations.
\end{packeditemize}

These challenges further raise the question: \textit{How can we design a framework that ensures confidentiality, integrity, and availability for a long-running distributed system built on heterogeneous TEEs?}

\bheading{Our \sysname Framework.} To answer the question, we propose \textit{TEE \underline{D}ecentralized \underline{A}utonomous \underline{O}rganization} (\sysname), a three-layer framework that manages a committee of heterogeneous TEEs as a distributed-trust infrastructure for long-running confidential services. 
\sysname enforces a single configuration invariant: the same consensus-certified configuration determines which nodes may order requests, hold secret shares, and participate in service.
To this end, \sysname first integrates a rule-based governance layer, inspired by the algorithmic transparency of DAOs~\cite{buterinNextgenerationSmartContract2014, defilippiGovernanceBlockchainSystems2018}, to realize evidence-driven autonomous management. 
It standardizes heterogeneous TEEs into a uniform node abstraction by deriving verifiable reports from remote attestation and authenticated vendor revocation/TCB information. 
These reports enable \sysname to authenticate node rejoin and detect compromised TEEs.
Furthermore, critical actions, such as provisioning nodes, updating configurations, or evicting compromised platforms, require cryptographically verified threshold approval, thereby asserting operational authority against centralized administrators. It replaces administrator-driven control with consensus-governed, rule-based governance that admits, revokes, and reconfigures the committee only via evidence-carrying operations that every member can independently verify.

Second, \sysname stores long-lived secrets and application state as cryptographic shares across heterogeneous TEEs. It uses DPSS to periodically refresh shares as the committee or hardware landscape changes, thereby rendering accumulated leakage obsolete. In addition, \sysname supports secure multi-party computation (MPC) over the distributed secrets.

Except for the above design, \sysname also abstracts all internal complexity through standardized \texttt{read}, \texttt{write}, and \texttt{execute} interfaces, enabling applications to inherit robust distributed trust without redesigning their security foundations.
This unified interface can support a wide array of confidential services, including encrypted key-value stores (KVS)~\cite{bailleuAvocadoSecureInMemory2021, pattukBigsecretSecureData2013, yuanBuildingEncryptedDistributed2016}, cryptocurrency wallets~\cite{mangipudiUncoveringImpactMental2023, yuDontPutAll2024}, and multi-party collaborative analytics~\cite{liagourisSECRECYSecureCollaborative2023, corrigan-gibbsPrioPrivateRobust2017, volgushevDEMOIntegratingMPC2016, zhouShortcutMakingMPCbased2024}. 

We implement a prototype of \sysname within a heterogeneous TEEs environment, integrating COBRA's DPSS scheme~\cite{vassantlalCOBRADynamicProactive2022} atop the HotStuff Byzantine Fault Tolerant (BFT) consensus protocol~\cite{yinHotStuffBFTConsensus2019}. The system is designed to be versatile: we aptly adapt BLS threshold signatures~\cite{bonehShortSignaturesWeil2001} and the general-purpose MP-SPDZ framework~\cite{kellerMPSPDZVersatileFramework2020} to operate over our DPSS layer, deployed across Intel SGX, Intel TDX, and Hygon CSV enclaves. We also conduct performance evaluations of KVS throughput, management cost, and computation overhead introduced by the framework. The evaluation results show that \sysname outperforms state-of-the-art confidential BFT system COBRA in KVS throughput, achieving up to $1.8\times$ higher throughput in the large cluster with 61 nodes, and demonstrates efficient system management with lower reconfiguration and recovery latency in smaller clusters. Additionally, \sysname introduces minimal computation overhead compared to vanilla MPC based on Shamir's secret sharing protocol under the malicious and honest majority models, with performance overhead below 10\% for most tasks and within 18\% for more complex workloads like machine learning.

\bheading{Contributions.} 
This paper makes three contributions: 
\begin{packeditemize}
    \item We identify the lifecycle problem of long-running heterogeneous TEE services: trust evidence, committee membership, and secret-share ownership must evolve consistently under adaptive compromise. We formalize this setting using an \textit{adaptive mobile adversary} and the security goals of safety, liveness, and proactive secrecy.
    
    \item We introduce \sysname, a three-layer framework that provides three unified interfaces to support KVS, cryptocurrency wallets, and multi-party collaborative analytics services. \sysname enforces the consistency by deriving node eligibility from attestation and revocation evidence, committing membership changes through BFT governance, and coupling each configuration transition with DPSS refresh, resharing, recovery, and MPC execution.
    
    \item We deploy \sysname on three mainstream TEE platforms, i.e., Intel SGX, Intel TDX, and Hygon CSV, and conduct extensive evaluations. The results demonstrate that \sysname achieves high KVS throughput, efficient autonomous management, and reasonable MPC computation overhead.
\end{packeditemize}

\section{Background and Motivation} 
\label{Background}

\subsection{Heterogeneous TEE Systems}
\label{bg_TEE}
TEE is a processor extension to protect sensitive code and data even when the operating system (OS) or software administrators are compromised. It provides a secure runtime environment, referred to as an \textit{enclave}\footnote{We use \textit{enclaves} to denote the isolated environments across different TEE technologies.}, which is isolated from other privileged software. 
As said in \secref{Introduction}, a single TEE implementation remains questionable in practice~\cite{chenSgxPectreStealingIntel2019, bulckForeshadowExtractingKeys2018, liSystematicLookCiphertext2022, buhrenInsecureProvenUpdated2019, buhrenOneGlitchRule2021, DBLP:conf/uss/LippGSMM16, hahnelHighResolutionSideChannels2017, leeInferringFinegrainedControl2017, leeHackingDarknessReturnoriented2017,  bulckTellingYourSecrets2017, DBLP:conf/ccs/WilkeS024, DBLP:conf/uss/RauscherWW0G25}. 
To address it, one promising approach is to distribute trust across heterogeneous TEEs, which inherently stems from the following observations.

\iheading{i) Various TEEs are available.} TEEs can be broadly categorized into two types: process-based TEEs, such as Intel SGX~\cite{IntelSoftwareGuard}, and virtual machine (VM)-based TEEs, including Intel TDX~\cite{IntelTrustDomain}, AMD SEV~\cite{AMDSecureEncrypted}, Hygon CSV~\cite{DocumentationX86Hygonsecurevirtualizationrst}, ARM CCA~\cite{ltdArmConfidentialCompute}, IBM PEF~\cite{huntConfidentialComputingOpenPOWER2021}. These mechanisms are widely available on commodity CPUs and exposed as managed confidential computing services by major cloud providers as shown in \figref{fig:tee-timelines}. At the same time, research TEEs such as Keystone~\cite{leeKeystoneOpenFramework2020}, Sanctum~\cite{costanSanctumMinimalHardware2016}, CURE~\cite{bahmaniCURESecurityArchitecture2021}, and TIMBER-V~\cite{weiserTimbervTagisolatedMemory2019} explore alternative enclave abstractions and hardware–software co-designs.
The wide availability of commodity TEEs establishes a practical basis for building heterogeneous TEE systems.

\begin{figure}[t]
  \centering
  \includegraphics[width=0.9\linewidth]{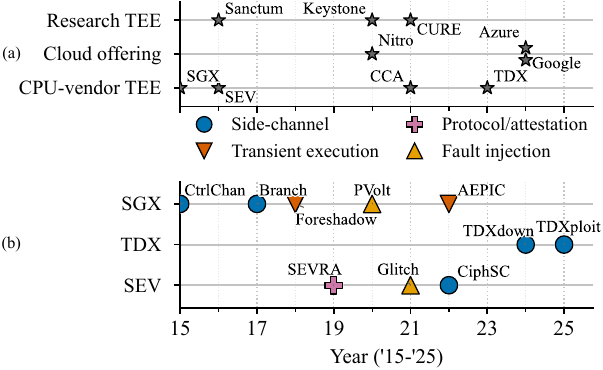}
  \caption{
    (a) TEE availability timeline. (b) Representative attacks over time.
    Controlled-channel attacks (CtrlChan)~\cite{xuControlledChannelAttacksDeterministic2015},
    Branch-shadowing side channels (Branch)~\cite{leeInferringFinegrainedControl2017},
    Foreshadow~\cite{bulckForeshadowExtractingKeys2018},
    Plundervolt (PVolt)~\cite{DBLP:conf/sp/MurdockOGBGP20},
    \AE PIC~\cite{DBLP:conf/uss/BorrelloKSLG022},
    SEV remote-attestation weaknesses (SEVRA)~\cite{buhrenInsecureProvenUpdated2019},
    Fault injection attacks (Glitch)~\cite{buhrenOneGlitchRule2021},
    Ciphertext side channels (CiphSC)~\cite{liSystematicLookCiphertext2022},
    TDXdown~\cite{DBLP:conf/ccs/WilkeS024}, and TDXploit~\cite{DBLP:conf/uss/RauscherWW0G25}. 
    }
  \label{fig:tee-timelines}
  \vspace{-5mm}
\end{figure}

\iheading{ii) Heterogeneous TEEs reduce correlated failures.} Prior studies indicate that different TEEs exhibit distinct vulnerabilities, each tied to their unique design choices and threat models. For instance, as shown in \figref{fig:tee-timelines}, Intel SGX suffers from various software attacks~\cite{checkowayIagoAttacksWhy2013a, leeHackingDarknessReturnoriented2017, weichbrodtAsyncShockExploitingSynchronisation2016}, speculative execution attacks~\cite{chenSgxPectreStealingIntel2019, bulckForeshadowExtractingKeys2018, DBLP:conf/sp/MurdockOGBGP20, DBLP:conf/uss/BorrelloKSLG022}) and side-channel attacks~\cite{shihTSGXEradicatingControlledChannel2017, hahnelHighResolutionSideChannels2017, xuControlledChannelAttacksDeterministic2015, leeInferringFinegrainedControl2017, wangLeakyCauldronDark2017,  bulckTellingYourSecrets2017}), while AMD SEV have been attacked via weaknesses in memory-encryption~\cite{ liSystematicLookCiphertext2022} and integrity trees~\cite{buhrenOneGlitchRule2021, buhrenInsecureProvenUpdated2019}. These differences arise from varying security assumptions, isolation mechanisms, and hardware abstractions, creating unique attack surfaces for each TEE type. Thus, this heterogeneity enables systems to avoid a single point of compromise and achieve stronger security guarantees.

\iheading{iii) Building practical heterogeneous TEE systems is nontrivial.} 
First, if each TEE stores the same plaintext secret, such as users' passwords, compromising any one instance would reveal that secret, even if all other instances remain secure. 
To address this, a promising way is to distribute secret shares among diverse enclave implementations, as illustrated in SVR3~\cite{connellSecretKeyRecovery2024}. This ensures that the compromise of a subset of hardware platforms or providers is insufficient to compromise data confidentiality. (See more in \secref{Introduction}.)

Second, many TEE applications, such as databases~\cite{DBLP:conf/sp/SchusterCFGPMR15} and analytics engines~\cite{DBLP:conf/sp/PriebeVC18}, TLS/CDN infrastructure~\cite{DBLP:conf/uss/HerwigGL20}, and blockchain-oriented applications~\cite{zhangTownCrierAuthenticated2016, DBLP:conf/ccs/BentovJ0BDJ19}, is \emph{long-lived, stateful}. In these deployments, secrets and sensitive state must remain protected \emph{over months/years} despite upgrades, reconfiguration, and especially \emph{adaptive compromise}. 
This requires the system to remove compromised TEEs or authenticate new secure TEEs to join through reconfiguration, while maintaining the security of computation and storage. 

\subsection{Multi-Party Computation (MPC)}
\label{bg_SS}
Secret sharing and multi-party computation are cryptographic techniques that decentralize trust, ensuring that no single party can unilaterally compromise data confidentiality.

\bheading{Threshold Secret Sharing and Proactive Protection.}
Shamir’s threshold Secret Sharing (SS) scheme~\cite{shamirHowShareSecret1979} splits a secret into $n$ shares such that any $t+1$ shares can reconstruct the secret, while $t$ or fewer reveal nothing. Thus, an adversary does not hold the full secret so long as it controls more than $t$ parties. 

In threshold signature schemes (TSS)~\cite{desmedtSocietyGroupOriented1987}, for example, a signing key is shared among $n$ parties, and any $t+1$ parties can jointly produce a valid signature, but no smaller subset can reconstruct the key. Classic SS and TSS, however, are static: the set of parties is fixed, and once a share is compromised, it remains valid indefinitely. DPSS~\cite{vassantlalCOBRADynamicProactive2022, zhouAPSSProactiveSecret2005, schultzMPSSMobileProactive2010, baronCommunicationoptimalProactiveSecret2015, maramCHURPDynamiccommitteeProactive2019} strengthens this by periodically refreshing shares and supporting changes in the committee without reissuing the underlying secret. Modern DPSS constructions (e.g., COBRA~\cite{vassantlalCOBRADynamicProactive2022}) allow parties to collaboratively reshare secrets, recover shares for temporarily offline nodes, and transfer shares to a new committee, thereby providing long-term secrecy even against an \textit{adaptive mobile adversary}. 

\bheading{Secret-Sharing–Based MPC.}
MPC protocols~\cite{yaoProtocolsSecureComputations1982, lindellSecureMultipartyComputation2021} allow several parties to jointly evaluate a function on their private inputs while revealing only the output. In secret-share–based MPC~\cite{escuderoIntroductionSecretSharingBasedSecure2022}, each intermediate value is represented as shares under a linear secret-sharing scheme; additions are performed locally on shares, while multiplications require several interactive communication rounds. General-purpose MPC frameworks such as MP-SPDZ~\cite{kellerMPSPDZVersatileFramework2020} compile high-level programs to arithmetic circuits over shares and support malicious-adversary security. Secret-sharing–based MPC ensures that no single party ever sees plaintext inputs or full application state, while still allowing arbitrary computations on those values.
\section{Problem Statement} \label{Overview}
As motivated in \secref{Background}, relying on a single TEE implementation creates a \emph{central point of attack}:
Once that implementation is broken, the running service within TEEs can lose confidentiality, integrity, or become unavailable.
Prior work~\cite{emmadautermanReflectionsTrustingDistributed2022, connellSecretKeyRecovery2024} has shown that distributing secrets across multiple TEEs can reduce reliance on any single implementation, but supporting \emph{long-running} confidential services still has two challenges, i.e., \textit{evidence-driven autonomous management:} and \emph{proactive secret maintenance}, as presented in \secref{Introduction}. The central consistency requirement is that a node's authority to participate in consensus, store a secret share, and contribute to MPC must be derived from the same certified configuration.
Otherwise, reconfiguration may remove a compromised TEE from the committee while leaving stale shares or computation authority inconsistent with the new membership.

To address these issues, \sysname, a long-running distributed-trust service framework built atop a committee of heterogeneous TEE-equipped servers, is proposed.
It aims to tolerate the compromise of some TEE \emph{implementations} over time and support evidence-driven committee reconfiguration without a trusted human operator. Besides, \sysname is designed to expose a minimal,  uniform API surface for building various confidential services (as discussed in \secref{sec:apps}). 

\begin{figure}[t]
    \centering
    \includegraphics[width=.4\textwidth]{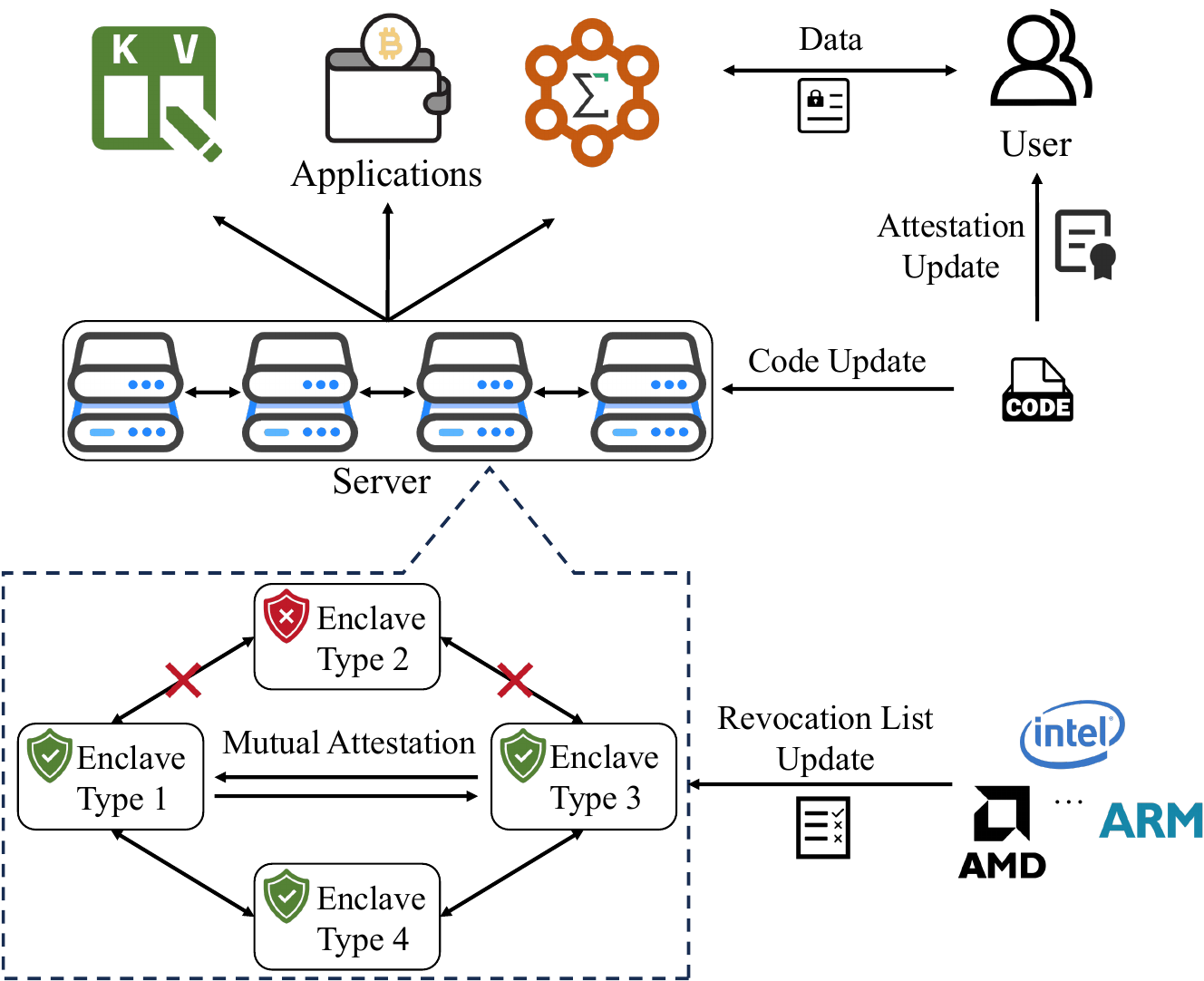}
    \caption{Overview of \sysname services.}
    \label{fig:TEE-DAO_overview}
    \vspace{-6mm}
\end{figure}

\subsection{System Model} \label{sec:sysmodel}
\sysname comprises four entities, as shown in \figref{fig:TEE-DAO_overview}.
\begin{packeditemize}
\item \textbf{TEE committee.}
A committee of $n$ servers runs the \sysname runtime inside enclaves. We use $e_i$ to denote the enclave node hosted on server $i$. Each enclave $e$ has a TEE type $\mathsf{type}(e)$ and an attestation-derived trust status $\mathsf{status}(e)$. The committee maintains replicated service state (including cryptographic secret shares and governance metadata) and collectively executes operations.

\item \textbf{Clients.}
Clients invoke \sysname through its service APIs (e.g., storing/retrieving secrets, and requesting computations) to access developers' applications. Each client request constitutes an \textit{operation}.
Here, application code and configuration updates (e.g., program logic, policies) should remain publicly inspectable/auditable by users or third-party monitors.

\item \textbf{Hardware vendors.}
Vendors (or trusted disclosure channels) publish authenticated vulnerability disclosures and revocation lists $RL$ for TEE implementations.
\sysname consumes $RL$ (together with attestation evidence) to update $\mathsf{status}(e)$ and guide reconfiguration decisions (e.g., removing nodes running revoked versions, allowing patched nodes to rejoin).
\end{packeditemize}

\subsection{Threat Model}
\label{sec:threat}
We consider an \emph{adaptive mobile adversary} $\mathcal{A}$ that controls a dynamic set of corrupted nodes at any given time by aligning with prior work~\cite{vassantlalCOBRADynamicProactive2022}. The size of this set is bounded by the threshold $t$, where $n=3t+1$ is the full committee size. The adversary can also mount network attacks, e.g., intercepting, dropping, delaying, reordering, and replaying messages.

A corrupted node may be affected at different depths. If the adversary controls only the host OS or hypervisor while the TEE remains intact, the node may omit, delay, or withhold protocol messages, but the adversary does \emph{not} learn the enclave's internal state. If the adversary also exploits a vulnerability in the TEE implementation, it obtains the enclave state, including secret shares and signing keys. In this case, the adversary can make the corresponding node deviate arbitrarily from the protocol, including equivocation, sending malformed shares, or returning inconsistent responses. We model such TEE-compromised nodes as Byzantine faults~\cite{castroPracticalByzantineFault2002}. For consensus and service correctness, we conservatively treat all corrupted nodes as faulty. For confidentiality, only TEE-compromised nodes contribute leaked secret shares.

We assume that compromising different TEE hardware implementations requires distinct exploits.
Due to the heterogeneous designs, we assume that the adversary cannot compromise more than $t$ TEEs at any time.
This allows the system to proactively refresh secrets before the security threshold is breached.
We also assume TEE vulnerabilities eventually become known via authenticated vendor channels (e.g., revocation lists and attestation~\cite{TrustedComputingBase, AffectedProcessorsTransient, AMDSEVConfidential}).
TEE nodes whose attestation status becomes invalid are treated as \emph{revoked} and removed.
Additionally, if cryptographic evidence proves a \emph{safety} violation (e.g., equivocation) by a node, it is immediately treated as Byzantine and removed.

\subsection{Security Goals}
\label{sec:goals}

We formalize \sysname's security through three properties: (i) \emph{Safety} (Agreement + Linearizability), capturing the consistency of the replicated service; (ii) \emph{Liveness}, ensuring service availability; and (iii) \emph{Proactive Secrecy}, capturing confidentiality against adaptive adversaries.

\begin{definition}[Agreement]
\label{def:agreement}
\sysname satisfies \emph{Agreement} if, for every configuration $\mathsf{cfg}$, all correct committee members commit the same totally ordered log $\mathsf{Log}_{\mathsf{cfg}}$.
\end{definition}

\begin{definition}[Linearizability]
\label{def:linearizability}
\sysname satisfies \emph{Linearizability} if the execution history of client operations is equivalent to a legal sequential history that preserves the real-time ordering of operations, consistent with the sequence of committed logs $\mathsf{Log}_{\mathsf{cfg}_0},\mathsf{Log}_{\mathsf{cfg}_1},\ldots$.
\end{definition}

\begin{definition}[Liveness]
\label{def:liveness}
\sysname satisfies \emph{Liveness} if, assuming a partially synchronous network and a correct client, any valid operation submitted by the client is eventually committed, executed, and terminated.
\end{definition}

\begin{definition}[Proactive Secrecy]
\label{def:secrecy}
\sysname satisfies \emph{Proactive Secrecy} if the adversary's view across multiple epochs is computationally indistinguishable from a simulation that does not possess the secret inputs. Specifically, previously exposed shares in configuration $\mathsf{cfg}_i$ do not reduce the entropy of secrets refreshed in configuration $\mathsf{cfg}_{j}$ (where $j > i$).
\end{definition}

\subsection{Challenges and Solutions}
\label{sec:challenges}
Building a long-running distributed-trust system under the \textit{adaptive mobile adversary} model presents unique challenges. We identify three primary challenges in eliminating centralized control, ensuring durability against cumulative compromise, and managing heterogeneity, alongside our proposed solutions. These challenges correspond to one lifecycle: evidence changes trust status, trust status changes membership, and membership changes where secret shares and MPC authority reside.

\ubpara[Challenge-1:]{Removing central points of attack.}\label{Challenge_1} 
In traditional distributed systems, critical operations, such as committee admission, revocation, and recovery, commonly rely on privileged administrator credentials.
If these credentials are compromised, the adversary can manipulate committee composition and reduce the security of the distributed system back to \textit{central point of attack}.

\ubpara[Solution-1:]{Consensus-governed committee management using verifiable evidence.}\label{Solution_1}  \sysname replaces administrator-driven committee management with committee decisions agreed by the BFT consensus protocol. Committee-management actions (e.g., admission, revocation, and recovery) are expressed as consensus-ordered operations and are triggered by verifiable evidence, including authenticated vendor revocation lists/attestation evidence and protocol-level misbehavior evidence.

\ubpara[Challenge-2:]{Guaranteeing adaptive, long-term security.}\label{Challenge_2} 
Heterogeneity alone is insufficient for long-term security. 
Given enough time, distinct TEE implementations will inevitably suffer from the accumulated zero-day vulnerabilities. As described in our threat model, an \textit{adaptive mobile adversary}~\cite{herzbergProactiveSecretSharing1995, canettiAdaptiveSecurityThreshold1999, vassantlalCOBRADynamicProactive2022} can sequentially compromise different nodes over time. 
Without a mechanism to invalidate past breaches, the adversary can accumulate secret shares across different configurations until the threshold is reached, eventually recovering the full secret.

\ubpara[Solution-2:]{Proactive share refresh and recovery tied to reconfiguration.}\label{Solution_2}  \sysname binds secret maintenance to committee reconfiguration. When nodes are removed (e.g., due to revocation or detected compromise), \sysname refreshes/re-shares secrets among the remaining members so that previously exposed shares become useless. When recovered/patched nodes are admitted, \sysname establishes fresh shares consistent with the new membership, mitigating long-term share accumulation.

\ubpara[Challeng-3:]{Supporting diverse confidential services.}\label{Challenge_3} 
To facilitate the development of distributed trust systems that support various applications, including encrypted key-value store~\cite{bailleuAvocadoSecureInMemory2021, pattukBigsecretSecureData2013, yuanBuildingEncryptedDistributed2016}, cryptocurrency wallets~\cite{mangipudiUncoveringImpactMental2023, yuDontPutAll2024}, multi-party collaborative analytics~\cite{liagourisSECRECYSecureCollaborative2023, corrigan-gibbsPrioPrivateRobust2017, volgushevDEMOIntegratingMPC2016, zhouShortcutMakingMPCbased2024}, it is also challenging to design unified interfaces capable of accommodating diverse storage and computational operations on long-term distributed secrets. 

\ubpara[Solution-3:]{Unified service APIs.}\label{Solution_3} 
\sysname provides a consistent abstraction layer with minimal APIs
(e.g., \textsf{write}/\textsf{read}/\textsf{execute} interfaces) that abstracts away heterogeneity, reconfiguration, and protocol details, enabling multiple services to share the same distributed-trust framework.

\subsection{Primitives and Building Blocks}\label{sec:primitives}
This section introduces the primitives and building blocks used. We also list the corresponding notations in Table~\ref{tab:notation}. 

\bheading{Committee and Configurations.}
We assume a \emph{partially synchronous} network model, where messages between honest nodes are delivered within a bounded delay after an unknown Global Stabilization Time (GST). This assumption is widely used in BFT protocols~\cite{castroPracticalByzantineFault2002, yinHotStuffBFTConsensus2019, Ladon2025, fast-hotstuff}. Nodes communicate via authenticated point-to-point channels (e.g., TLS).

The system evolves through a sequence of configurations, denoted as $\mathsf{cfg}_0,\mathsf{cfg}_1,\dots$.
For each configuration $\mathsf{cfg}$,  \sysname maintains a common governance state $\sigma_{\mathsf{cfg}}$ that includes the membership, governance rules, and the approved code digest $h_{\mathsf{cfg}}$ needed to validate and execute actions.
Committee management is expressed as governance proposals $p$. A proposal is admissible in configuration $\mathsf{cfg}$ only if $\mathsf{isValid}(p,\sigma_{\mathsf{cfg}})$ holds, including checks driven by evidence (e.g., $RL$, attestation), policy, and the heterogeneity constraint $\mathsf{Hetero}(\mathsf{cfg})$.

\begin{table}[t]
\centering
\footnotesize
\setlength{\abovecaptionskip}{2pt}
\renewcommand{\arraystretch}{0.95}
\setlength{\tabcolsep}{4pt}

\caption{Summary of notations.}
\label{tab:notation}

\begin{tabular*}{\columnwidth}{@{\extracolsep{\fill}}
m{1.6cm}p{3.0cm}|
m{1.3cm}p{2.1cm}@{}}
\toprule
\textbf{Term} & \textbf{Description} & \textbf{Term} & \textbf{Description} \\
\midrule

$n$ & Committee size &
$\mathsf{Log}_{\mathsf{cfg}}$ & Committed log \\

$t$ & Compromise threshold &
$\mathsf{Cert}_{\mathsf{cfg}}(x)$ & Commit certificate \\

$e_i$ & Committee enclave &
$\mathsf{Hetero}(\mathsf{cfg})$ & Type-count bound \\

$\mathsf{isValid}()$ & Admissibility predicate &
$\mathsf{type}(e)$ & TEE type \\

$\mathsf{status}(e)$ & Trust status &
$\sigma_{\mathsf{cfg}}$ & Governance state \\

$RL$ & Revocation list &
$h_{\mathsf{cfg}}$ & Code digest \\

$p$ & Governance proposal &
$\mathsf{cfg}_i$ & Configuration \\

\bottomrule
\end{tabular*}

\vspace{-3mm}
\end{table}

\bheading{Ordered Operations and APIs.}
Both client requests and governance actions are represented as operations and are totally ordered by a BFT replication protocol.
Within each configuration $\mathsf{cfg}$, committee members agree on the committed operation log $\mathsf{Log}_{\mathsf{cfg}}$ and execute operations in that order.
We write $\mathsf{Cert}_{\mathsf{cfg}}(x)$ for a commit certificate that operation is committed under $\mathsf{cfg}$ (e.g., a QC or threshold signature),
which is used to justify committed operations and configuration transitions.
Each API invocation returns a return code $rc$ indicating success/failure of the request.

\bheading{Building Blocks and Their Assumptions.}
We instantiate \sysname using established cryptographic and distributed building blocks. 
Specifically, we employ HotStuff~\cite{yinHotStuffBFTConsensus2019} for consensus, standard ECDSA signature for certificates, Shamir's Secret Sharing (SSS)~\cite{shamirHowShareSecret1979} augmented with Feldman's VSS~\cite{feldmanPracticalSchemeNoninteractive1987} for verifiable secret sharing, and MP-SPDZ~\cite{kellerMPSPDZVersatileFramework2020} for MPC computation. 
Our analysis relies on their standard properties:

\begin{packeditemize}
  \item[(A1)] \textit{BFT safety and liveness.} 
  The underlying consensus protocol (e.g., HotStuff) guarantees that correct nodes commit a unique, totally ordered log (i.e., safety) and make progress during periods of synchrony (i.e., liveness), provided $n \ge 3t+1$.

  \item[(A2)] \textit{Certificate unforgeability.}
  The signature scheme is unforgeable under chosen-message attacks (EU-CMA). 
  The adversary cannot forge a valid commit certificate $\mathsf{Cert}_{\mathsf{cfg}}(x)$ without the secret keys of a quorum of committee members.

  \item[(A3)] \textit{DPSS correctness and secrecy.} 
  The secret sharing scheme ensures that: (i) any set of $t+1$ valid shares can reconstruct the secret; (ii) any set of $\le t$ shares reveals no information about the secret; and (iii) shares refreshed via proactive resharing are independent of previous shares.

  \item[(A4)] \textit{MPC correctness and security.}
  MPC protocols (e.g., MP-SPDZ) provides privacy and correctness against up to $t$ malicious parties. 

  \item[(A5)] \textit{Heterogeneous independence.}
Vulnerabilities in distinct TEE implementations are uncorrelated. 
\end{packeditemize}

\section{\sysname Framework} \label{Framework}

\begin{figure}[t]
    \centering
    \includegraphics[width=.4\textwidth]{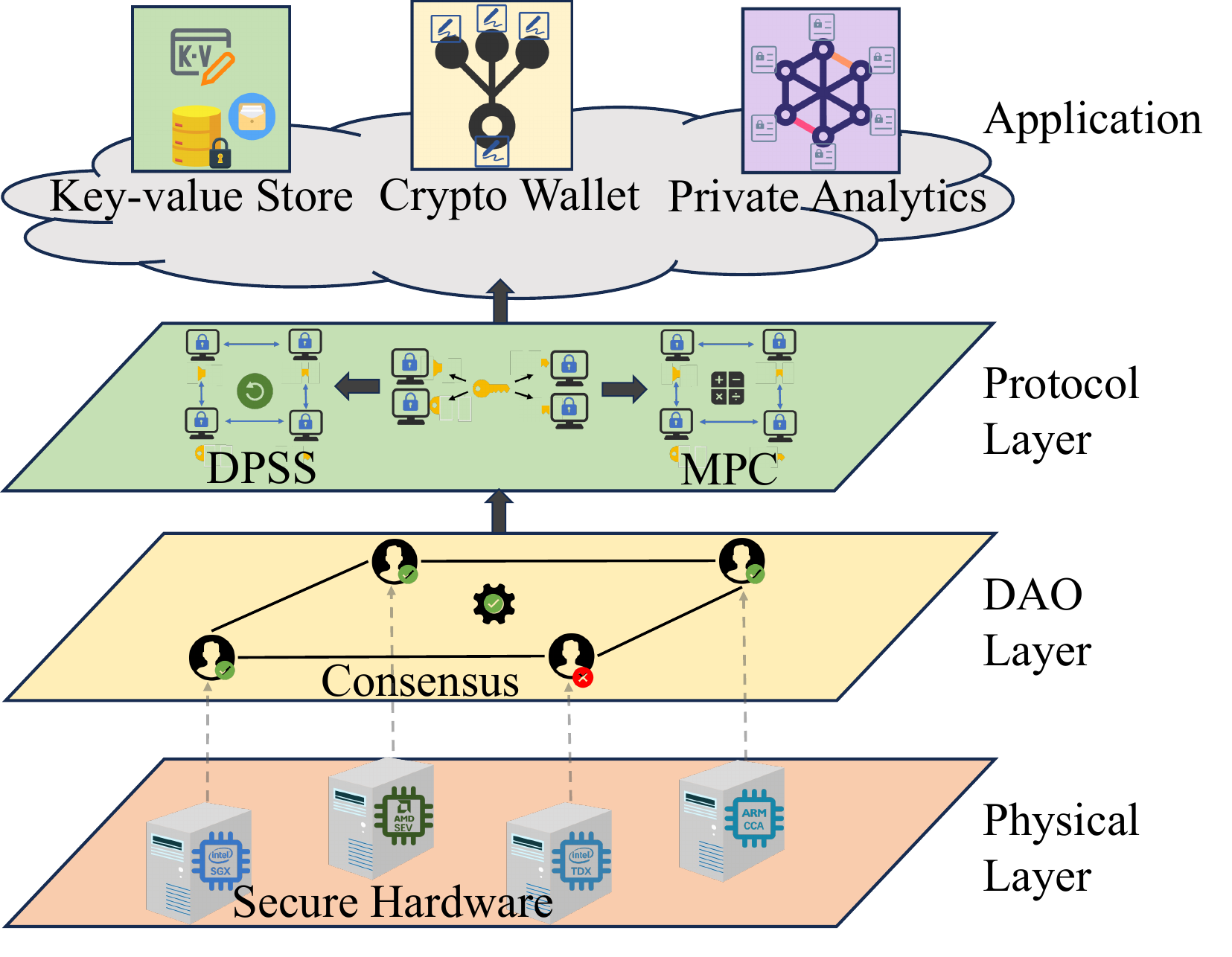}
    \caption{A three layer architecture of \sysname.}
    \label{fig:tee-dao_framework}
    \vspace{-6mm}
\end{figure}

\subsection{Architectural Overview}
\label{sec:arch}
\figref{fig:tee-dao_framework} gives an overview of the \sysname framework. \sysname follows a \emph{three-layer} design consisting of the \emph{Physical}, \emph{DAO}, and \emph{Protocol} layers. The layers separate hardware-rooted trust establishment, consensus-governed committee evolution, and confidential services supporting. This separation allows \sysname to use heterogeneous TEEs as committee members while keeping membership changes auditable and binding protocol state maintenance to the resulting configuration sequence. We briefly introduce each component below and describe them in detail in the following subsections.

\begin{packeditemize}
\item \textbf{Physical Layer (\ssecref{sec:physical}).} 
The Physical layer provides a uniform abstraction for committee members running on heterogeneous TEEs. It derives each member's trust status from vendor attestation and authenticated vulnerability or revocation information. These trust signals serve as evidence for higher-layer governance decisions in heterogeneous deployments (Solution-1).

\item \textbf{DAO Layer (\ssecref{sec:dao}).}
The DAO layer replaces administrator-controlled committee management with \emph{consensus-governed} management (Solution-1). Admission, revocation, quarantine, and recovery are expressed as ordered governance operations, producing an agreed configuration sequence. Using the evidence model from \secref{sec:threat}, the DAO layer can quarantine suspected nodes and remove \emph{revoked} nodes according to policy, making membership and authorization changes auditable and collectively agreed.

\item \textbf{Protocol Layer (\ssecref{sec:protocol}).}
The Protocol layer runs storage and computation protocols over the DAO-managed committee. It exposes a minimum, consistent API (\texttt{write/read/execute} interfaces) for services such as KVS, cryptocurrency wallets, and private analytics (Solution-3). To preserve long-term security against a mobile adversary, protocol state maintenance is tied to reconfiguration: when the DAO layer changes the committee (e.g., due to revocation or admission), the Protocol layer performs the corresponding refresh, resharing, or recovery procedure (Solution-2).
\end{packeditemize}

\subsection{Physical Layer}
\label{sec:physical}
The Physical layer establishes hardware-rooted identities for heterogeneous TEE nodes and produces attestation reports for the upper layers. It exports a uniform node abstraction in which each node identity is bound to the authorized \sysname runtime and to a vendor-authenticated platform security version.

\bheading{Interfaces to DAO Layer.} For each enclave $e$, the Physical layer exports a type label $\mathsf{type}(e)$ identifying its trust domain and an attestation-derived trust status $\mathsf{status}(e)$. \sysname uses three trust statuses. A \textsf{Trusted} enclave presents fresh, verifiable attestation evidence that binds an enclave-generated public key to the approved code digest and to a non-revoked platform version. A \textsf{Revoked} enclave is associated with evidence of an unauthorized code digest or a vulnerable platform version. An \textsf{Expired} enclave cannot provide sufficiently fresh and verifiable evidence, for example, because the relevant vendor-authenticated security information is missing or stale. These signals feed the DAO-layer admissibility predicate $\mathsf{isValid}(\cdot,\sigma_{\mathsf{cfg}})$ and drive membership management.

\bheading{TEE Abstraction.} We treat heterogeneous TEEs as a uniform abstraction as long as they provide the following primitives:

\begin{packeditemize}
    \item \textit{Attestation.} An attestation report binds the enclave measurement, the reported Trusted Computing Base (TCB) version, and an enclave-generated public key used to authenticate subsequent protocol messages.
    
    \item \textit{Isolation.} An uncompromised enclave can protect its runtime state from a compromised host OS or hypervisor.
    
    \item \textit{Revocation Lists.} Each trust domain publishes authenticated information identifying revoked or vulnerable platform versions (revocation lists, TCB information, or equivalent). We denote this input as $RL$.
\end{packeditemize}
To verify attestation reports and $RL$, enclaves pin vendor verification material (e.g., root certificates or public keys) as part of the certified configuration metadata (carried in $\sigma_{\mathsf{cfg}}$).

\bheading{Mutual Attestation.}
Given the attestation abstraction described above, the system must ensure that only uncompromised TEEs can participate in the protocol. Mutual attestation~\cite{chenMAGEMutualAttestation2022} provides this guarantee: two enclaves verify each other's identity by checking their measurements and software versions, thereby excluding enclaves running untrusted or tampered code.

\iheading{Leader-based trust-chain attestation.} A naïve deployment would require every pair of the $n$ enclaves to attest to each other, resulting in $O(n^2)$ attestations during initialization. To make the process scalable, we adopt a leader-based trust-chain attestation while keeping verification fully decentralized:

\begin{packeditemize}
\item \textit{Report generation.} When a node joins or refreshes its status, it generates an ephemeral public key inside the enclave and produces an attestation report that binds this key to the \sysname runtime digest, the reported TCB version, its TEE type, and a freshness proof such as a timestamp or nonce.

\item \textit{Verification against configuration policy.} The designated leader enclave and members can mutually verify each other's report using the pinned vendor roots and the latest available $RL$ for the corresponding enclave. A report is accepted only when the attested code digest matches the governance-approved digest $h_{\mathsf{cfg}}$, the reported security version is not revoked by the relevant $RL$, and the freshness proof falls within the accepted window.
\end{packeditemize}

Once trust is established, the leader aggregates all attestation results and broadcasts a system-wide attestation report. This evidence is incorporated into governance proposals and becomes a committee membership decision through the DAO layer's ordered execution and admissibility checks.

A compromised leader cannot unilaterally admit an untrusted node. Admission proposals must carry the attestation report for each candidate enclave and the digest of the referenced vendor revocation list. Every member can therefore re-verify the report signature against pinned vendor roots and local policy. If the leader equivocates, omits required evidence, or proposes a node that fails verification, correct members reject the proposal; the protocol then times out and selects a new leader.

\bheading{Timely Status Refresh.}
Attestation does not detect unknown zero-day vulnerabilities. Instead, \sysname targets timely reaction once vendor-authenticated security information becomes available.
We adopt two optimizations for a timely refresh:

\iheading{i) Periodic refresh.} Each enclave periodically fetches the latest $RL$ (through the untrusted host but verified inside the enclave) and re-evaluates active peers’ reported security versions.  

\iheading{ii) Event-driven refresh.} Any enclave that observes a newer vendor-authenticated $RL$ than the one referenced by the current configuration may submit evidence to the leader when that update marks an active member as \textsf{Revoked}. The leader then broadcasts a governance proposal so that the committee can adopt the update consistently. If an enclave's status cannot be refreshed within a bounded staleness window, the enclave marks $\mathsf{status}(e)=\textsf{Expired}$, allowing DAO policy to conservatively quarantine nodes whose trust cannot be validated.

\subsection{DAO Layer}
\label{sec:dao}
The DAO layer provides consensus-governed committee management. It replaces administrator-driven control with an auditable, rule-based workflow that turns TEE trust signals and protocol-level evidence of misbehavior into consensus-backed membership and policy decisions. These decisions advance the system through a configuration sequence $\mathsf{cfg}_0,\mathsf{cfg}_1,\ldots$ and expose an externally verifiable record of governance actions.

\bheading{Interfaces to Protocol Layer.} 
The DAO layer exports the current configuration $\mathsf{cfg}$, commit certificates for governance actions and configuration transitions, and the replicated governance state $\sigma_{\mathsf{cfg}}$. The configuration identifies the current member set and certified metadata. A certificate $\mathsf{Cert}_{\mathsf{cfg}}(\cdot)$ attests that an action or transition was committed in the ordered log $\mathsf{Log}_{\mathsf{cfg}}$. The governance state records the active rule set, membership metadata, group verification key $\mathsf{MPK}$, approved code digest $h_{\mathsf{cfg}}$, and current phase. The Protocol layer treats a governance action as effective only when it is accompanied by a valid certificate and runs configuration-triggered refresh or recovery procedures only for committed transitions.

\begin{algorithm}[t]
\DontPrintSemicolon 
\caption{Committee management for compromised and recovered members}
\label{IdentifyAndManagement}

\SetNlSty{}{}{.} 
\SetAlgoNlRelativeSize{0} 
\SetNlSkip{1em} 

\nonl\textbf{Step 1: Discover suspect nodes}\;
    \nonl\textit{Method 1. status update}\;
        Leader compares $\mathsf{type}(e)$ of active members with $RL$.\;
        Adds nodes in $RL$ to $\mathsf{list_{susp}}$.\;
    \nonl\textit{Method 2. rules violation}\;
        \nonl\ForEach{$e_i$}{
            Submit evidence $\mathsf{evid}_i = \{ \mathsf{msg}_i, \mathsf{sig}_i\}$ and violated rules $r$ in accusation $\mathsf{acc}_i = \{i, \mathsf{evid}_i, r \}$ to leader.\;
            Leader assesses accusation, verifies message authenticity and rule violation using $verify(\mathsf{msg}_i, \mathsf{sig}_i)$ and $verify(\mathsf{msg}_i, r)$.\;
            \nonl\If{verification is true}{
                Add accused node to $\mathsf{list_{susp}}$.\;
            }
            \nonl\Else{
                Add accusing member to $\mathsf{list_{susp}}$.\;
            }
        }
\nonl\textbf{Step 2: propose removal of suspect nodes}\;
Leader broadcasts a removal proposal $p$ containing $\mathsf{list}_{\mathsf{susp}}$ and supporting evidence.\;

\nonl\textbf{Step 3: admit patched nodes}\;
Leader attests the requesting node and broadcasts an admission proposal $p$.\;
    \nonl\ForEach{$e_i$}{
            Attest the requesting node and vote for $p$ only if the attestation report is valid 
        }
\end{algorithm}

\bheading{Governance State and Proposals.}
In each configuration $\mathsf{cfg}$, the committee maintains a deterministic governance state $\sigma_{\mathsf{cfg}}$. Committee-management actions are expressed as proposals $p$, covering actions such as removal, quarantine, and admission. A proposal is admissible only when $\mathsf{isValid}(p,\sigma_{\mathsf{cfg}})$ holds. This predicate captures the DAO layer's rule-based policy by checking authorization, attached evidence, and the heterogeneity constraint $\mathsf{Hetero}(\mathsf{cfg})$.

\bheading{Evidence and Rule-based Enforcement.}
We next describe the inputs for $\mathsf{isValid}(\cdot)$ and rule-based detection. The DAO layer consumes two classes of evidence:
\begin{packeditemize}
    \item Physical layer trust signals. Namely, each node's type label $\mathsf{type}(e)$, its attestation-derived status$\mathsf{status}(e)$, and authenticated vendor security updates such as $RL$. These inputs determine whether a node remains trusted, has become revoked, or should be treated as expired.
    
    \item Protocol-level misbehavior evidence. When a member observes misbehavior, it can submit a signed evidence bundle $\mathsf{evid}_i=\{\mathsf{msg}_i,\mathsf{sig}_i\}$ and an accusation $\mathsf{acc}_i=\{i,\mathsf{evid}_i,r\}$ identifying the violated rule $r$ in $\sigma_{\mathsf{cfg}}$.
\end{packeditemize}
Members process these inputs according to governance policy and update the suspect set $\mathsf{list}_{\mathsf{susp}}$.

\bheading{Consensus-ordered Governance.}
Evidence becomes effective only after it is turned into a committee decision. All governance actions are executed via the same consensus-ordered workflow: proposals are broadcast, totally ordered, and applied identically by correct nodes to advance from $\mathsf{cfg}$ to $\mathsf{cfg}'$. \algref{IdentifyAndManagement} summarizes this workflow for suspect discovery, removal, and admission:
\begin{packeditemize}
    \item \emph{Discover suspects.} The leader collects and validates evidence (status updates and accusations) and derives a candidate management action (e.g., remove or admit).
    
    \item \emph{Propose an action.} The leader packages the intended action as a proposal $p$ with the evidence required for $\mathsf{isValid}(p,\sigma_{\mathsf{cfg}})$.
    
    \item \emph{Order and commit.} The committee orders $p$ in $Log_{\mathsf{cfg}}$, producing $Cert_{\mathsf{cfg}}(p)$, after which all correct members apply $p$ identically to obtain $\sigma_{\mathsf{cfg}'}$ and a new configuration $\mathsf{cfg}'$.
\end{packeditemize}
Because every effective governance change must be both admissible and committed, neither a member nor a leader can unilaterally change membership or policy.

\bheading{Lifecycle Phases.}
The DAO layer also exposes a phase flag $\mathsf{phase}\in\{\mathsf{Normal},\mathsf{Reconfig}\}$ as a coordination signal. During $\mathsf{Normal}$, the system provides the full service under the current configuration. When a committed governance decision installs $\mathsf{cfg}'$(e.g., after removal or admission), the system enters $\mathsf{Reconfig}$. The Protocol layer then executes the refresh, resharing, or recovery procedure required by the new configuration. Once those procedures are complete under $\mathsf{cfg}'$, the system returns to $\mathsf{Normal}$.

\subsection{Protocol Layer}
\label{sec:protocol}
The Protocol layer turns a DAO-installed configuration $\mathsf{cfg}$ into an application-facing confidential service. It stores secrets using DPSS and executes computations over those secrets using MPC. Every protocol outcome is configuration-dependent: requests are ordered in $\mathsf{Log}_{\mathsf{cfg}}$ and returned with a certificate verifiable under the group verification key $\mathsf{MPK}$. This design maintains security across heterogeneous TEEs by binding membership and share movement to attestation-derived trust signals and $\mathsf{Hetero}(\mathsf{cfg})$.

\bheading{Interfaces to the Application.}
This layer exposes three application interfaces. A client calls $\textsf{write}(k,v,md)\rightarrow rc$ to store or update a secret value $v$ under key $k$ with metadata $md$. It calls $\textsf{read}(k)\rightarrow (v,md,rc)$ to retrieve the value and metadata associated with $k$. It calls $\textsf{execute}(\textit{code},\textit{args})\rightarrow(\textit{out},rc)$ to run an approved computation over secret-shared state.

The return code $rc$ reports the protocol-level outcome and is distinct from the attestation-derived trust status of an enclave. \textsf{SUCCESS} means that the request was committed and certified. \textsf{NOTFOUND} means that the requested key is absent. \textsf{RECONFIG} indicates that the service is temporarily unavailable because the committee is reconfiguring. \textsf{ABORT} indicates that an inconsistency or malicious behavior was detected during protocol execution. For every \textsf{SUCCESS} response, the client obtains a certificate $\mathsf{Cert}_{\mathsf{cfg}}(\cdot)$ binding the result to $\mathsf{Log}_{\mathsf{cfg}}$. The client accepts the response only if the certificate verifies under $\mathsf{MPK}$ and configuration.

\bheading{Protocol State and Authenticated Membership.}
Each member enclave $e_i$ maintains a protected state $\gamma_i$. This state contains a key-value table whose per-key records have the form $\langle k, v_i, c_i\rangle$, where $v_i$ is a Shamir share of $v$ and $c_i$ is a commitment used to detect malformed shares. It also contains the current protocol phase specified by $\sigma_{\mathsf{cfg}}$. All protocol messages are carried over mutually authenticated channels established through Physical-layer attestation. Each received message is therefore attributable to a member identity in $\mathsf{cfg}$, together with its recorded TEE type and trust status.

\bheading{Normal-phase Operations.}
\sysname stores each secret value $v$ as a degree-$t$ Shamir sharing across the $n$ members in $\mathsf{cfg}$. During the $\mathsf{Normal}$ phase, the committee first orders each request in $\mathsf{Log}_{\mathsf{cfg}}$ and then executes the corresponding DPSS or MPC procedure inside enclaves. The protocol returns \textsf{SUCCESS} only after the request is committed and certified. As shown in \figref{fig:TEE-DAO_service}, \sysname supports three operations.

\begin{packeditemize}
    \item For \textsf{write}, members store their received record $\langle k,v_i,c_i\rangle$ in $\gamma_i$, and the protocol reports success only after the operation is committed under $\mathsf{cfg}$.
    \item For \textsf{read}, the client collects responses containing $(v_i,c_i,md)$ for key $k$. Each response is bound to a committed version through $\mathsf{Cert}_{\mathsf{cfg}}(\cdot)$ that the client can validate. The client discards invalid shares using the commitments $\{c_i\}$ and reconstructs $v$ from any $t+1$ valid shares, which is the threshold needed to interpolate a degree-$t$ sharing. If the key is absent, the protocol returns \textsf{NOTFOUND}.
    \item For \textsf{execute}, members run an actively secure MPC computation over their shares and any public inputs, and return a certified output $\mathsf{Cert}_{\mathsf{cfg}}(\textit{reqID},\textit{out},\ell)$, where $\ell$ is the committed log position of the operation.
\end{packeditemize}

\subsubsection{Data Management Protocol}
\label{sec:dpss}
To defend against an adaptive mobile adversary, secrets in \sysname must remain protected even when the adversary moves between enclaves over time. The Protocol layer therefore couples COBRA DPSS~\cite{vassantlalCOBRADynamicProactive2022} maintenance to the DAO-driven configuration lifecycle.

When the DAO commits a membership-changing proposal, such as removal, admission, or readmission, it moves the system into $\mathsf{Reconfig}$ and installs a new configuration $\mathsf{cfg}'$. The Protocol layer starts DPSS maintenance during this reconfiguration window because the set of share holders is changing and share movement must follow the new membership and heterogeneity constraints. Concretely, the protocol runs distributed polynomial generation (DPG) to sample a fresh degree-$t$ zero-sharing polynomial for refresh or a resharing polynomial for the new configuration.

\begin{packeditemize}
\item \emph{Secret resharing.}
To move existing secrets to the new membership set, the committee runs COBRA-style dynamic resharing. Each remaining member contributes a verifiable re-randomization term, producing shares that remain consistent with the same underlying secret but are now held by members of $\mathsf{cfg}'$. Confidentiality is preserved because the adversary controls at most $t$ enclaves at any time, and correctness is enforced through commitments and consistency checks.

\item \emph{Proactive refresh.}
Even without a membership change, the committee periodically or policy-triggeredly refreshes shares by adding a freshly generated degree-$t$ zero-sharing polynomial to the current sharing. This invalidates shares that may have been exposed earlier. The DPG used for refresh is executed inside enclaves and tied to the DAO log so that all correct members refresh the same version.

\item \emph{Secret recovery.}
When a removed enclave is patched and readmitted, it no longer has a valid current share. The protocol runs share recovery so that the recovering enclave obtains helper contributions from other members and reconstructs its own new share under $\mathsf{cfg}'$. Commitments provide verifiability, and the secret $v$ is never reconstructed in the clear.
\end{packeditemize}

During the $\mathsf{Reconfig}$ phase, clients may receive return code \textsf{RECONFIG} and retry after $\mathsf{cfg}'$ has been installed. 

\begin{figure}
    \centering
    \includegraphics[width=.45\textwidth]{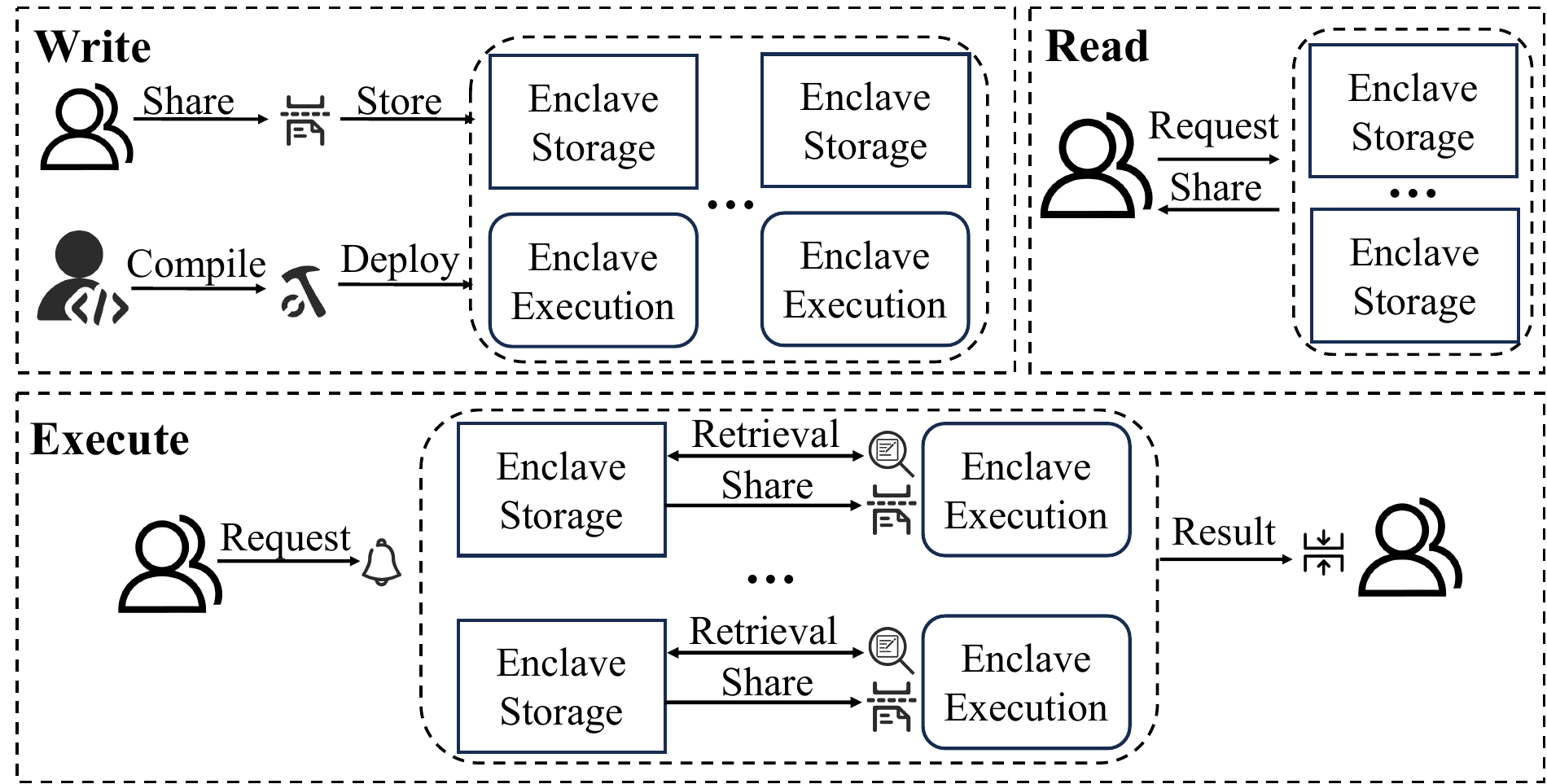}
    \caption{Interfaces of \sysname.}
    \label{fig:TEE-DAO_service}
    \vspace{-6mm}
\end{figure}

\subsubsection{Decentralized Computation Protocol}
\label{sec:mpc}
To support general-purpose computation over secret-shared state, \sysname executes MPC over DPSS-stored shares. An application deploys each computation as a code artifact identified by its digest. The DAO layer governs which artifacts are admissible through $\sigma_{\mathsf{cfg}}$, in particular through the approved code digest $h_{\mathsf{cfg}}$. An $\textsf{execute}(\textit{code},\textit{args})$ request is accepted only when the code digest matches $h_{\mathsf{cfg}}$ or otherwise satisfies the admissibility rule encoded by $\mathsf{isValid}(\cdot,\sigma_{\mathsf{cfg}})$. This prevents unauthorized computations from running inside the committee.

The threat model allows compromised enclaves as well as host and network interference. \sysname therefore cannot rely on semi-honest behavior. It uses actively secure MPC over Shamir sharing, such as protocols proposed by Lindell and Nof~\cite{lindellFrameworkConstructingFast2017a}, with consistency checks and degree-reduction steps that preserve correctness in the presence of up to $t$ Byzantine parties\footnote{\sysname can also be viewed as an enhancement of a pure MPC system, as discussed in Appendix~\ref{appen:mpc}.}. If a computation aborts because of a detected inconsistency, members generate signed evidence bundles $\mathsf{evid}_i$ that can be submitted to the DAO layer as accusations. The DAO layer can then add nodes to the suspect set and trigger reconfiguration according to policy.

\bheading{Certified Outputs and Auditing.}
Upon successful completion, participating members produce a certified output for the client. Members sign, or provide signature shares for, the tuple $(\textit{reqID},\textit{out},\mathsf{cfg},\ell)$, where $\ell$ is the log position of the committed \textsf{execute} operation. The client accepts $\textit{out}$ only if it verifies the corresponding certificate $\mathsf{Cert}_{\mathsf{cfg}}(\textit{reqID},\textit{out},\ell)$. The certificate also supports post-hoc auditing: if an enclave is later marked vulnerable through a status update, clients and auditors can identify outputs certified during the affected configuration window and inspect which transactions depended on the affected committee.

\section{Security Analysis} \label{sec:security}

\subsection{Safety Analysis}
\label{sec:security-safety}

\begin{theorem}[Safety]
\label{thm:safety}
Under (A1--A2), \sysname satisfies Agreement (Def.~\ref{def:agreement}) and
Linearizability (Def.~\ref{def:linearizability}).
\end{theorem}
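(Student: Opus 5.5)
The plan is to prove Agreement and Linearizability by induction over the configuration sequence $\mathsf{cfg}_0,\mathsf{cfg}_1,\ldots$. Within each configuration the argument reduces to (A1); across configuration boundaries it reduces to (A2), via the fact that every transition is itself a committed, certified governance operation.

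\emph{Agreement.} Fix a configuration $\mathsf{cfg}$ and assume, as the induction hypothesis, that all correct members of $\mathsf{cfg}$ agree on the same membership, the same governance state $\sigma_{\mathsf{cfg}}$, and the same starting point.
\begin{itemize}
\item The base case is $\mathsf{cfg}_0$, which is fixed at initialization.
\item Under this hypothesis, the HotStuff instance run by $\mathsf{cfg}$ has at most $t$ faulty members out of $n=3t+1$. So (A1) gives that all correct members commit the same totally ordered log $\mathsf{Log}_{\mathsf{cfg}}$, whether its entries are client operations or governance proposals.
\item For the inductive step, note that $\mathsf{cfg}'$ is installed only by applying a proposal $p$ that appears in $\mathsf{Log}_{\mathsf{cfg}}$ with certificate $\mathsf{Cert}_{\mathsf{cfg}}(p)$. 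The predicate $\mathsf{isValid}(p,\sigma_{\mathsf{cfg}})$ and the transition function are deterministic, and they are evaluated on identical inputs: the same log prefix and the same pinned evidence (attestation reports and the $RL$ digest carried in $p$). Hence every correct member derives the same $\sigma_{\mathsf{cfg}'}$ and the same $\mathsf{cfg}'$, which establishes the hypothesis for the next configuration.
\item By (A2), no adversary can present a forged $\mathsf{Cert}_{\mathsf{cfg}}(p)$ that would make some correct member, or a newly admitted node or client, install a different successor configuration. Forging one requires a quorum of keys, which the adversary lacks since it controls at most $t$ members.
\end{itemize}

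\emph{Linearizability.} Define the sequential history as the concatenation $\mathsf{Log}_{\mathsf{cfg}_0}\cdot\mathsf{Log}_{\mathsf{cfg}_1}\cdots$, where each configuration's log is truncated at its committed reconfiguration entry. Assign each client operation a linearization point at its commit in this global order. Three things must be checked.
\begin{itemize}
\item \emph{Legality.} Correct members execute committed operations deterministically in log order. Their DPSS and MPC state represents the same logical key-value state, with resharing preserving the underlying secret. So each \textsf{read} or \textsf{execute} returns the value the sequential specification prescribes at that point.
\item \emph{Real-time order.} A client receives \textsf{SUCCESS} only with a valid certificate, which exists only after commit by (A2). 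Consequently, if operation $o_1$ completes before $o_2$ is invoked, $o_1$ is already committed when $o_2$ is proposed, so $o_2$ is ordered later. This holds both within a single log and across configurations, because later logs strictly follow the transition entry.
\item \emph{Excluded responses.} Responses with return code \textsf{RECONFIG} or \textsf{ABORT} carry no certificate. They are treated as operations that did not take effect, and the client retries them as fresh operations.
\end{itemize}

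The main obstacle is the configuration boundary. An operation ordered in $\mathsf{cfg}$ must not be executed after the transition entry by some members and before it by others, and stale members of the old configuration must not be able to certify results afterwards. I would handle this by requiring two things:
\begin{itemize}
\item the reconfiguration entry acts as a barrier, so operations after it in $\mathsf{Log}_{\mathsf{cfg}}$ are discarded with \textsf{RECONFIG};
\item certificates bind to $\mathsf{cfg}$ and log position $\ell$, so a client accepts only certificates for the configuration it currently recognizes.
\end{itemize}
Then (A2) excludes equivocating cross-configuration certificates, provided fewer than a quorum of old members are corrupted. The threat model's bound of $t$ corrupted nodes at any time supplies this.
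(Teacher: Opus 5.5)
Your proposal is correct and follows essentially the paper's route: (A1) applied configuration by configuration gives Agreement, and the certificate accompanying a \textsf{SUCCESS} response, via (A2), pins the operation to a unique position in $\mathsf{Log}_{\mathsf{cfg}}$ that serves as its linearization point; your induction over configurations just makes explicit the configuration-soundness reasoning that the paper states separately in Lemma~\ref{lem:admissible}. Two caveats: your legality bullet (consistent DPSS/MPC state, resharing preserving the secret) relies on (A3)/(A4), which are outside the theorem's (A1--A2) and which the paper avoids by taking the committed-log order itself as the legal sequential history; and the per-time bound of $t$ does not by itself stop a mobile adversary from accumulating, across epochs, a quorum of an \emph{old} configuration's signing keys unless keys are refreshed or erased per configuration, although the paper's bare appeal to (A2) glosses over the same point.
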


\begin{proof}[Proof] We prove the following properties by combining the assumed security guarantees of the underlying building blocks (\secref{sec:primitives}) with \sysname’s design. 

\begin{packeditemize}
\item \emph{Agreement.} Given a configuration $\mathsf{cfg}$. By (A1), 
correct committee members commit a unique totally ordered log $\mathsf{Log}_{\mathsf{cfg}}$. This is exactly Def.~\ref{def:agreement}.
    
\item \emph{Linearizability.} Consider an operation $op$ that returns \textsf{SUCCESS}.  By the protocol interface, a successful completion is justified by a commit certificate $\mathsf{Cert}_{\mathsf{cfg}}(op)$ for some configuration $\mathsf{cfg}$. By (A2), such a certificate implies that $op$ is committed under $\mathsf{cfg}$, hence appears at a unique position in $\mathsf{Log}_{\mathsf{cfg}}$. Define the linearization point of $op$ to be this commit position in the global sequence consistent with the committed logs $\mathsf{Log}_{\mathsf{cfg}_0},\mathsf{Log}_{\mathsf{cfg}_1},\ldots$. This yields a single sequential history consistent with the committed logs and preserving non-overlap completion order, establishing Def.~\ref{def:linearizability}.
\end{packeditemize}
\end{proof}

\bheading{Configuration soundness.}
The above safety guarantees hold when configuration changes cannot bypass governance rules: A correct node applies a reconfiguration only if it is both committed and admissible under the current
governance state.

\begin{lemma}[Only committed and admissible proposals take effect]
\label{lem:admissible}
If a correct node applies a governance proposal $p$ while in configuration $\mathsf{cfg}$, then
(i) $\mathsf{Cert}_{\mathsf{cfg}}(p)$ exists, and (ii) $\mathsf{isValid}(p,\sigma_{\mathsf{cfg}})$ holds at the time of application.
\end{lemma}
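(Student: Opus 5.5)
The argument follows a correct node's local application logic and then invokes (A1) and (A2). By design (Sec.~\ref{sec:dao}, ``Consensus-ordered Governance'' and ``Interfaces to Protocol Layer''), a correct node changes its governance state in only one way. It executes the committed log $\mathsf{Log}_{\mathsf{cfg}}$ in order. For each entry that is a governance proposal $p$, it evaluates the deterministic predicate $\mathsf{isValid}(p,\sigma_{\mathsf{cfg}})$ and applies $p$ only if the predicate returns true. The proof therefore splits into two parts: (i) the existence of a certificate, and (ii) the admissibility check at application time.

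For (i), suppose a correct node applies $p$ in $\mathsf{cfg}$. It does so only when $p$ occupies a committed position $\ell$ in its local copy of $\mathsf{Log}_{\mathsf{cfg}}$. A correct node treats a log position as committed only after it has seen a quorum certificate, either from its own participation in HotStuff's commit rule or from a certificate supplied by a peer. That certificate is exactly $\mathsf{Cert}_{\mathsf{cfg}}(p)$. There remains the case where the node was handed a forged certificate. Here we invoke (A2): producing a valid $\mathsf{Cert}_{\mathsf{cfg}}(p)$ without the signing keys of a quorum of $\mathsf{cfg}$'s members breaks EU-CMA. A Byzantine leader controls at most $t<2t+1$ keys, so it cannot fabricate one. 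We then use (A1) to argue that every correct node sees the same $p$ at position $\ell$, so the certificate is unique and agreed.

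For (ii), $\mathsf{isValid}$ is a deterministic function of $p$ and of $\sigma_{\mathsf{cfg}}$. The governance state $\sigma_{\mathsf{cfg}}$ is itself obtained by applying the committed prefix of the log. I will show by induction over the log prefix that all correct nodes hold identical $\sigma_{\mathsf{cfg}}$ when they reach position $\ell$. The base case is $\sigma_{\mathsf{cfg}_0}$, fixed at genesis. The inductive step uses (A1) together with the determinism of application. Given this, the check that the node performs just before applying $p$ is precisely $\mathsf{isValid}(p,\sigma_{\mathsf{cfg}})$ evaluated on the current state, so (ii) holds at the time of application. The evidence inside $p$ (attestation reports, $RL$ digests, signed accusations) is re-verified locally against vendor roots pinned in $\sigma_{\mathsf{cfg}}$. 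This means a malicious leader's claims are not trusted (Sec.~\ref{sec:physical}).

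The main obstacle is the gap between ordering and validation. HotStuff may commit a proposal that is not admissible, because replicas vote on the block and not necessarily on its semantic validity. So (i) does not imply (ii) by itself. I will handle this by making explicit that application is ``execute-then-filter'': a committed but inadmissible $p$ is a no-op, and it is never applied. A second subtlety is timing across a configuration boundary, where $p$ might be certified under an older $\mathsf{cfg}$. I will restrict to certificates bound to the current $\mathsf{cfg}$ identifier, which is exactly how the statement indexes $\mathsf{Cert}_{\mathsf{cfg}}$.
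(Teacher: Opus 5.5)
Your proposal is correct and follows the paper's own argument. Part (i) holds because a correct node applies only committed entries of $\mathsf{Log}_{\mathsf{cfg}}$, which by (A1--A2) are justified by $\mathsf{Cert}_{\mathsf{cfg}}(p)$, and part (ii) holds because the governance transition itself checks $\mathsf{isValid}(p,\sigma_{\mathsf{cfg}})$ before applying $p$. Your extra material (the forgery argument, the induction showing all correct nodes hold the same $\sigma_{\mathsf{cfg}}$, and the explicit execute-then-filter handling of committed but inadmissible proposals) is consistent with the paper. It goes beyond what the lemma requires, though, since (ii) concerns only the check performed by the node that applies $p$.
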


\begin{proof}[Proof]
Correct nodes apply state transitions only for committed log entries; thus $p$ must be in
$\mathsf{Log}_{\mathsf{cfg}}$ and justified by $\mathsf{Cert}_{\mathsf{cfg}}(p)$ (A1--A2). The governance transition checks
$\mathsf{isValid}(p,\sigma_{\mathsf{cfg}})$ before applying $p$ by definition of admissibility, so (ii) holds.
\end{proof}

\begin{corollary}[Installed configurations satisfy heterogeneity]
\label{cor:hetero}
For every installed configuration $\mathsf{cfg}$, $\mathsf{Hetero}(\mathsf{cfg})$ holds.
\end{corollary}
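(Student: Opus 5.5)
The plan is to argue by induction on the index of the configuration sequence $\mathsf{cfg}_0,\mathsf{cfg}_1,\ldots$, with Lemma~\ref{lem:admissible} doing the work at each step. The claim to carry through the induction is that every installed configuration $\mathsf{cfg}_j$ satisfies $\mathsf{Hetero}(\mathsf{cfg}_j)$.

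\emph{Base case.} Here I will assume that the genesis configuration $\mathsf{cfg}_0$ is admitted under the same governance rules as every later configuration. Concretely, its membership was established through mutual attestation, and its initial governance state $\sigma_{\mathsf{cfg}_0}$ is pinned with $\mathsf{Hetero}(\mathsf{cfg}_0)$ checked at bootstrap. This is an assumption about initialization, not a derived fact, and I would state it explicitly.

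\emph{Inductive step.} Suppose $\mathsf{cfg}_j$ is installed and satisfies $\mathsf{Hetero}(\mathsf{cfg}_j)$, and let $\mathsf{cfg}_{j+1}$ be installed next.
\begin{itemize}
\item By the DAO-layer workflow, a new configuration is installed only when a correct node applies some committed governance proposal $p$ in $\mathsf{cfg}_j$.
\item By Lemma~\ref{lem:admissible}, $\mathsf{Cert}_{\mathsf{cfg}_j}(p)$ exists and $\mathsf{isValid}(p,\sigma_{\mathsf{cfg}_j})$ holds at application time.
\item Since $\mathsf{isValid}$ explicitly includes the heterogeneity constraint, and correct nodes apply $p$ deterministically, the resulting configuration $\mathsf{cfg}_{j+1}$ satisfies $\mathsf{Hetero}(\mathsf{cfg}_{j+1})$.
\item Agreement (Theorem~\ref{thm:safety}) ensures that all correct nodes install the same $\mathsf{cfg}_{j+1}$. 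So "installed configuration" is well defined, independent of which correct node we ask.
\end{itemize}

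\emph{Main obstacle.} The delicate point is what $\mathsf{isValid}(p,\sigma_{\mathsf{cfg}})$ actually checks. It must evaluate $\mathsf{Hetero}$ on the \emph{post-state} $\mathsf{cfg}'$ that $p$ would produce, not merely on the current $\mathsf{cfg}$. I would make this reading explicit: the predicate simulates the deterministic transition and checks the type-count bound on the result. Without that reading, the corollary does not follow. With it, each step is immediate.

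\emph{Revocations.} I would also handle the case where a configuration change is forced by revocation, which could by itself drop below the type-count bound. There are two options. Under the design as described, such a removal proposal is simply inadmissible unless paired with admissions that restore $\mathsf{Hetero}$, so no violating configuration is ever installed. The alternative is to note that the corollary concerns only installed configurations, so a blocked proposal never produces one. Either way the invariant is preserved.
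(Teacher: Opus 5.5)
Your argument is correct and is essentially the paper's: the paper's proof is just your inductive step, namely that every configuration update is applied only if admissible by Lemma~\ref{lem:admissible}, and that $\mathsf{Hetero}$ is enforced inside $\mathsf{isValid}(\cdot,\sigma_{\mathsf{cfg}})$. Your explicit assumption that the genesis $\mathsf{cfg}_0$ satisfies $\mathsf{Hetero}$, and your requirement that $\mathsf{isValid}$ constrain the resulting configuration (the paper literally writes $\mathsf{Hetero}(\mathsf{cfg})$ for the current one), are left implicit in the paper's one-line proof, so they are useful additions rather than a different route.
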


\begin{proof}[Proof]
Any membership/configuration update is applied only if admissible (Lemma~\ref{lem:admissible}).
Since $\mathsf{Hetero}(\cdot)$ is enforced inside $\mathsf{isValid}(\cdot,\sigma_{\mathsf{cfg}})$, the resulting installed
configuration satisfies $\mathsf{Hetero}(\mathsf{cfg})$.
\end{proof}

\subsection{Liveness Analysis}
\begin{theorem}[Liveness]
\label{thm:liveness}
Under (A1, A3, A4, A5), \sysname guarantees that valid client requests are eventually committed and executed.
\end{theorem}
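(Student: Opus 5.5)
The plan is to argue liveness in three stages, mirroring the lifecycle of a request: ordering, execution, and termination across reconfigurations. Fix a correct client submitting a valid operation $op$ after GST.

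\emph{Stage 1 (commitment).} Under partial synchrony, (A1) guarantees progress whenever the current configuration has $n \ge 3t+1$ members with at most $t$ faulty. I would first check that every installed configuration meets this bound. The threat model caps the number of corrupted nodes at $t$ at any time. Removals only eliminate revoked or provably Byzantine nodes. By Corollary~\ref{cor:hetero}, each installed configuration satisfies $\mathsf{Hetero}(\mathsf{cfg})$. Using (A5), I would argue that no single implementation break can push the faulty count above $t$. A correct client rebroadcasts $op$ until it receives a certified reply. A faulty leader that censors $op$ is eventually replaced by the view change of (A1), and after GST some correct leader proposes $op$. Hence $op$ enters $\mathsf{Log}_{\mathsf{cfg}}$ with a certificate $\mathsf{Cert}_{\mathsf{cfg}}(op)$.

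\emph{Stage 2 (execution).} I would split on the operation type.
\begin{itemize}
\item \textsf{write} and \textsf{read} are local share operations. For \textsf{read}, at least $2t+1 \ge t+1$ correct members return valid shares, and the client filters out malformed shares with the commitments, so reconstruction succeeds by (A3)(i).
\item For \textsf{execute}, (A4) gives correctness and termination of the actively secure MPC with up to $t$ malicious parties. If the MPC aborts, the resulting evidence bundles drive the DAO to remove the culprits (Algorithm~\ref{IdentifyAndManagement}), and the client retries.
\end{itemize}

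\emph{Stage 3 (termination across reconfiguration).} This is the main obstacle. A request may repeatedly receive \textsf{RECONFIG} or \textsf{ABORT}, so the proof needs a bound on these outcomes. I would first show that each $\mathsf{Reconfig}$ phase terminates. DPG, resharing and recovery are each committed log entries executed among at least $2t+1$ correct members, and they complete after GST by (A3) together with COBRA's termination. The system therefore returns to $\mathsf{Normal}$ under $\mathsf{cfg}'$.

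Next I would bound the number of aborts. Each abort yields verifiable evidence that removes at least one Byzantine node, and at most $t$ corrupted nodes exist at any time. So only finitely many aborts can occur unless the adversary moves to fresh nodes. Here I would need an extra fairness assumption: either reconfigurations are finite, or the system eventually spends a Normal window long enough to process the request. I expect to state this assumption explicitly, for instance that governance proposals are rate-limited per epoch. With it, the client's retries eventually land in a Normal phase with a correct leader, and $op$ is committed, executed and answered with \textsf{SUCCESS}.
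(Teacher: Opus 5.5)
Your Stages 1 and 2 essentially match the paper's proof. It argues that (A5) plus the heterogeneity constraint keeps at most $t$ nodes faulty, so a quorum of $2t+1$ correct nodes exists; that (A1) overcomes leader omission; and that (A3) and (A4) give reconstruction and MPC execution because $2t+1 > t+1$. The paper then handles reconfiguration in one sentence, asserting that the DAO layer ``replaces persistently unresponsive nodes, preserving the resilience conditions.''

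Your Stage 3 is where the gap lies. You add a fairness or rate-limiting hypothesis on reconfigurations that is not among (A1, A3, A4, A5). What you would establish is therefore a conditional liveness result, not the theorem as stated. Your reason for needing it is sound: nothing in the listed assumptions stops a mobile adversary from repeatedly getting a node removed, moving to a fresh one, and keeping clients in \textsf{RECONFIG}. The paper's proof does not close this either, since it never bounds how often reconfiguration can preempt service. So the missing ingredient is an assumption the paper leaves implicit, not an idea you overlooked. You should present your result as liveness under that added hypothesis rather than as a proof of the statement.

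Two steps in your own argument would also fail as written.

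First, bounding aborts via ``each abort yields verifiable evidence that removes at least one Byzantine node'' requires identifiable abort, which (A4) does not provide. (A4) grants only privacy and correctness; it also does not grant termination, which you attribute to it in Stage 2. Lindell--Nof-style protocols are secure with abort: a failed multiplication check reveals that someone cheated but not who. A Byzantine member can therefore force aborts indefinitely without ever being removed. Since $n = 3t+1$, you could instead appeal to MPC with guaranteed output delivery (BGW-style error correction works for $t < n/3$), which makes abort counting unnecessary. That is stronger than (A4) as stated, and it is what the paper's ``guaranteeing \ldots MPC execution'' tacitly assumes.

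Second, your Stage 1 claim that every installed configuration keeps $n \ge 3t+1$ with at most $t$ faulty does not follow from ``removals only eliminate revoked or provably Byzantine nodes.'' After a removal the committee has only $3t$ members. A mobile adversary can then abandon the removed node and corrupt a fresh member. You need the admission of replacement nodes, which is what the paper's ``replace'' alludes to, to restore the bound.
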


\begin{proof}[Proof]
Liveness requires physical node availability, consensus progress, and data reconstructability.
First, by (A5), the heterogeneity constraint prevents common-mode failures from compromising $>t$ nodes, guaranteeing that a quorum of $2t+1$ correct nodes always exists.
Second, by (A1), HotStuff overrides leader omission faults to ensure transaction ordering at the protocol layer; by (A3 and A4), the threshold sharing scheme ensures that the valid committee size of $2t+1$ is strictly larger than the reconstruction threshold $t+1$, guaranteeing DPSS reconstruction and MPC execution.
Finally, the DAO layer executes reconfiguration to replace persistently unresponsive nodes, preserving the resilience conditions for the layers above, and satisfying Def.~\ref{def:liveness}.
\end{proof}

\subsection{Proactive Secrecy Analysis}
\label{sec:security-secrecy}

\begin{theorem}[Proactive secrecy]
\label{thm:proactive}
Under (A2--A4), \sysname satisfies Proactive Secrecy (Def.~\ref{def:secrecy}).
\end{theorem}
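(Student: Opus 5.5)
I plan to prove Theorem~\ref{thm:proactive} with a hybrid/simulation argument organized by epochs. An \emph{epoch} is the interval between two consecutive installed configurations $\mathsf{cfg}_i$ and $\mathsf{cfg}_{i+1}$, including the $\mathsf{Reconfig}$ window in which resharing, refresh, or recovery runs.

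The first step is to pin down which shares the adversary holds in each epoch. By the threat model, only TEE-compromised nodes leak enclave state, and at most $t$ enclaves are compromised at any time. A node corrupted during the $\mathsf{Reconfig}$ window between $\mathsf{cfg}_i$ and $\mathsf{cfg}_{i+1}$ is counted against the budget of both adjacent epochs, as in standard proactive secret sharing. By Lemma~\ref{lem:admissible} and (A2), the set of share holders in each epoch is exactly the membership of a committed configuration. The adversary cannot forge a certificate that installs a fake configuration. It therefore cannot redirect resharing to nodes it controls, and it cannot inflate the number of shares it receives beyond its corruptions in that epoch.

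Next I build a simulator $\mathcal{S}$ that runs without the secrets. In each epoch, $\mathcal{S}$ gives the adversary uniformly random field elements as the shares of the (at most $t$) compromised nodes. By (A3)(ii), these are distributed identically to real degree-$t$ Shamir shares. The Feldman commitments are a concern, since they reveal $g^{v}$. I will either read (A3)(ii) as covering the full VSS transcript, or argue computational indistinguishability under the discrete-log assumption; this is why the definition is stated computationally. For the resharing and refresh transcripts, $\mathcal{S}$ samples the DPG contributions of honest parties as fresh degree-$t$ polynomials consistent with the already-fixed corrupted shares. For \textsf{execute}, I invoke the MPC simulator guaranteed by (A4), feeding it only the certified output. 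I then run a hybrid over epochs $0,1,\ldots$, replacing real epoch-$j$ messages with simulated ones one epoch at a time. Each hop is justified by (A3) for storage and maintenance and by (A4) for computation.

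The main obstacle is the cross-epoch step: showing that shares exposed in $\mathsf{cfg}_i$ give no advantage in $\mathsf{cfg}_j$ for $j>i$. Here I use (A3)(iii). A refreshed sharing equals the old sharing plus a fresh random zero-sharing, or is a fresh random resharing, contributed in part by at least one honest party. The new shares of any $t$ parties are therefore uniformly random and independent of all old shares. So the joint view of epochs $i$ and $j$ factorizes, with the only common information being the secret itself, which neither $t$-set reveals. The subtle cases are nodes corrupted across a boundary and readmitted nodes obtaining shares through recovery. In those cases I will argue that the helper contributions are masked by a fresh random polynomial, so the recovering node (if corrupted) learns only its own new share. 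That share still counts toward the at most $t$ corruptions of the new epoch. Combining the hybrids gives computational indistinguishability of the multi-epoch view from the simulation, which is exactly Def.~\ref{def:secrecy}.
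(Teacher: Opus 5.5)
Your plan is essentially the paper's argument. The paper's proof also rests on (A3)(ii) for the at most $t$ shares exposed within a configuration, on (A3)(iii) to make exposures in $\mathsf{cfg}_i$ useless after a refresh, and on the (A4) simulator for each \textsf{execute}. You recast its comparison of two executions (identical logs and outputs, different secrets) as an explicit simulator with an epoch-by-epoch hybrid. You also spell out what the paper leaves tacit: where (A2) is actually used, double counting of corruptions in a $\mathsf{Reconfig}$ window, and masking of helper contributions in recovery. One branch of your plan does not go through, however: the discrete-log fallback for the Feldman commitments.

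The discrete-log assumption gives only one-wayness, not indistinguishability. Any compromised member holds the Feldman coefficient commitments, whose constant term is $g^{v}$, or it can interpolate $g^{v}$ in the exponent from $t+1$ per-share commitments. A distinguisher that knows the candidate secrets can therefore simply test $g^{v}\stackrel{?}{=}g^{v_0}$. That is the situation in the paper's two-execution formulation, and in any experiment where the environment chooses the inputs. Low-entropy secrets can likewise be found by brute force, and no hardness assumption repairs this. There is a second problem. A Feldman commitment fixes $f(i)$ uniquely, so your simulator cannot give a fresh uniform share to a node corrupted adaptively after the commitments are published. It would need the discrete log of a value $g^{f(i)}$ that it published without knowing $f(i)$.

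So take your first branch and state it openly as an idealization: (A3)(ii) covers the whole VSS transcript, i.e.\ the commitments are treated as hiding, as in Pedersen-style VSS. That is exactly what the paper's proof assumes silently, since it applies (A3) only to shares and never mentions commitments.

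Two smaller bookkeeping points:
\begin{enumerate}
\item Lemma~\ref{lem:admissible} is proved from (A1--A2), so citing it adds (A1) to the theorem's listed hypotheses. This is harmless under $n=3t+1$, but you should say so.
\item Your per-epoch budget of $t$ does not follow from `at most $t$ at any time' when the adversary is mobile. State it as the operative assumption, and delimit epochs by every refresh, including periodic refreshes inside a configuration.
\end{enumerate}
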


\begin{proof}[Proof]
Fix two executions that induce the same committed logs
$\mathsf{Log}_{\mathsf{cfg}_0},\mathsf{Log}_{\mathsf{cfg}_1},\ldots$ and the same API outputs, but differ in the underlying
plaintext secrets and private inputs. We show that the adversary's views are computationally indistinguishable.

For each MPC invocation (the $\mathsf{execute}$ interface), by (A4) there exists a simulator producing a
computationally indistinguishable view given only the public committed information and the output.
For secret-shared values used for storage and computation, within any configuration at most $t$ shares
can be exposed at any time, so they reveal no information about the plaintext; across configurations,
whenever a reconfiguration refreshes shares, previously exposed shares do not help recover secrets in later
configurations by (A3). Therefore, conditioned on the same committed logs and the same API outputs,
the adversary's entire view is indistinguishable between the two executions, establishing Def.~\ref{def:secrecy}.
\end{proof}

\section{Performance Evaluation} \label{Evaluation}
We evaluate \sysname from three primary service abstractions: KVS, system membership management, and MPC computation. To provide a comprehensive assessment of system performance, we select the state-of-the-art BFT SMR system, HotStuff~\cite{yinHotStuffBFTConsensus2019}, as the baseline and conduct a comparative analysis with COBRA~\cite{vassantlalCOBRADynamicProactive2022}, a cutting-edge BFT SMR system with confidentiality protection. Our evaluation aims to answer the following questions:
\begin{packeditemize}
    \item \textbf{(Q1) KVS Performance:} How does \sysname perform as a key-value storage service versus traditional BFT SMR systems? (\ssecref{kvstorage_performance})
    \item \textbf{(Q2) Membership Management Overhead:} What is the performance overhead of membership management, including system reconfiguration and member recovery? (\ssecref{management_cost})
    \item \textbf{(Q3) Computation Task Efficiency:} How efficiently does \sysname perform computation tasks, including AES encryption, simple graph search algorithms, and machine learning, under varying numbers of nodes? (\ssecref{computation_efficiency})
    \item \textbf{(Q4) Heterogeneous TEE Deployment:}
    How does \sysname perform under mixed deployments spanning different TEE implementations?
(\ssecref{heterogeneous_performance})
\end{packeditemize}

\begin{figure*}[ht]
    \centering

    \begin{subfigure}[b]{0.31\textwidth}
        \centering
        \includegraphics[width=\textwidth]{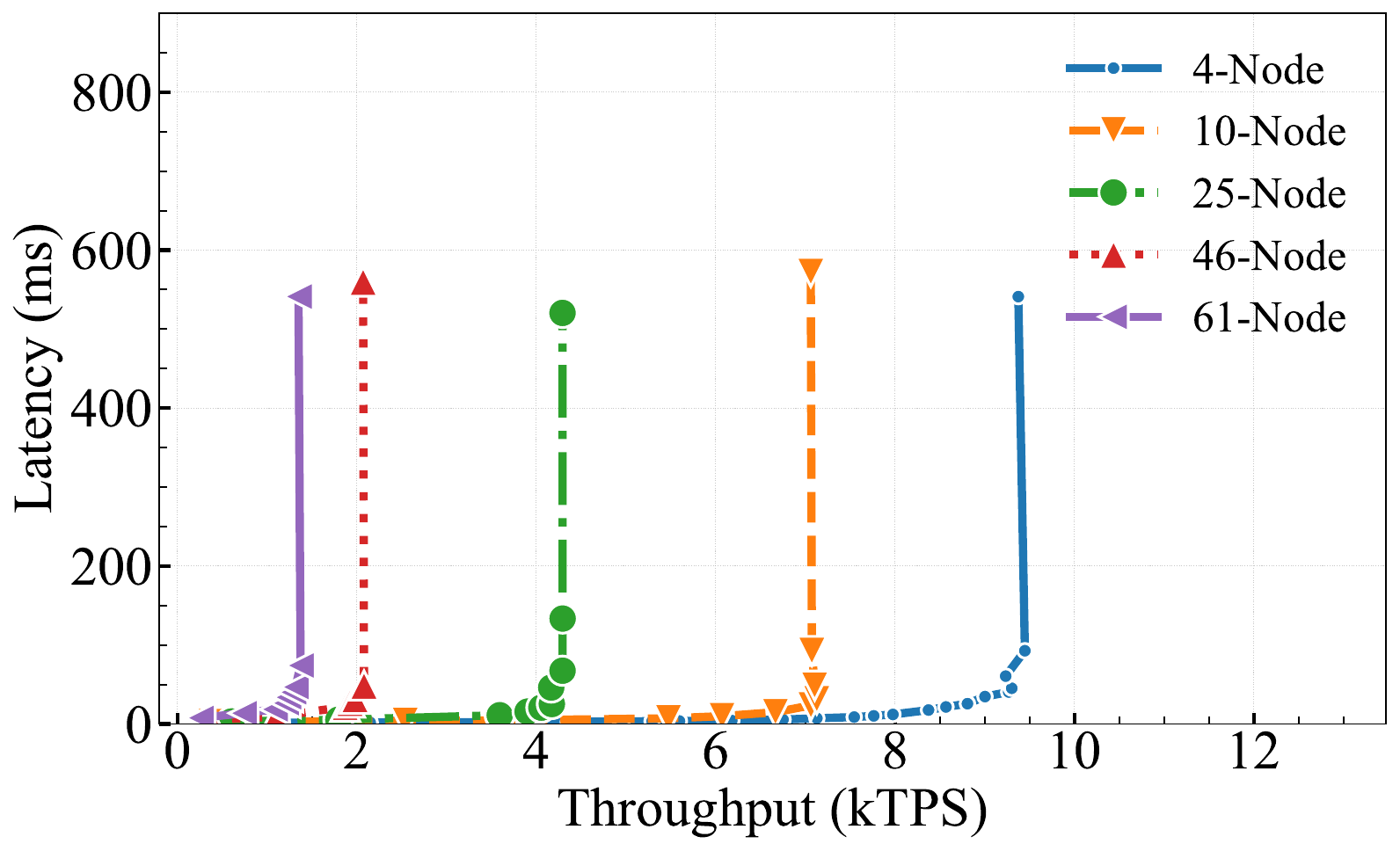}
        \caption{COBRA (Write)}
    \end{subfigure}
    \hfill
    \begin{subfigure}[b]{0.31\textwidth}
        \centering
        \includegraphics[width=\textwidth]{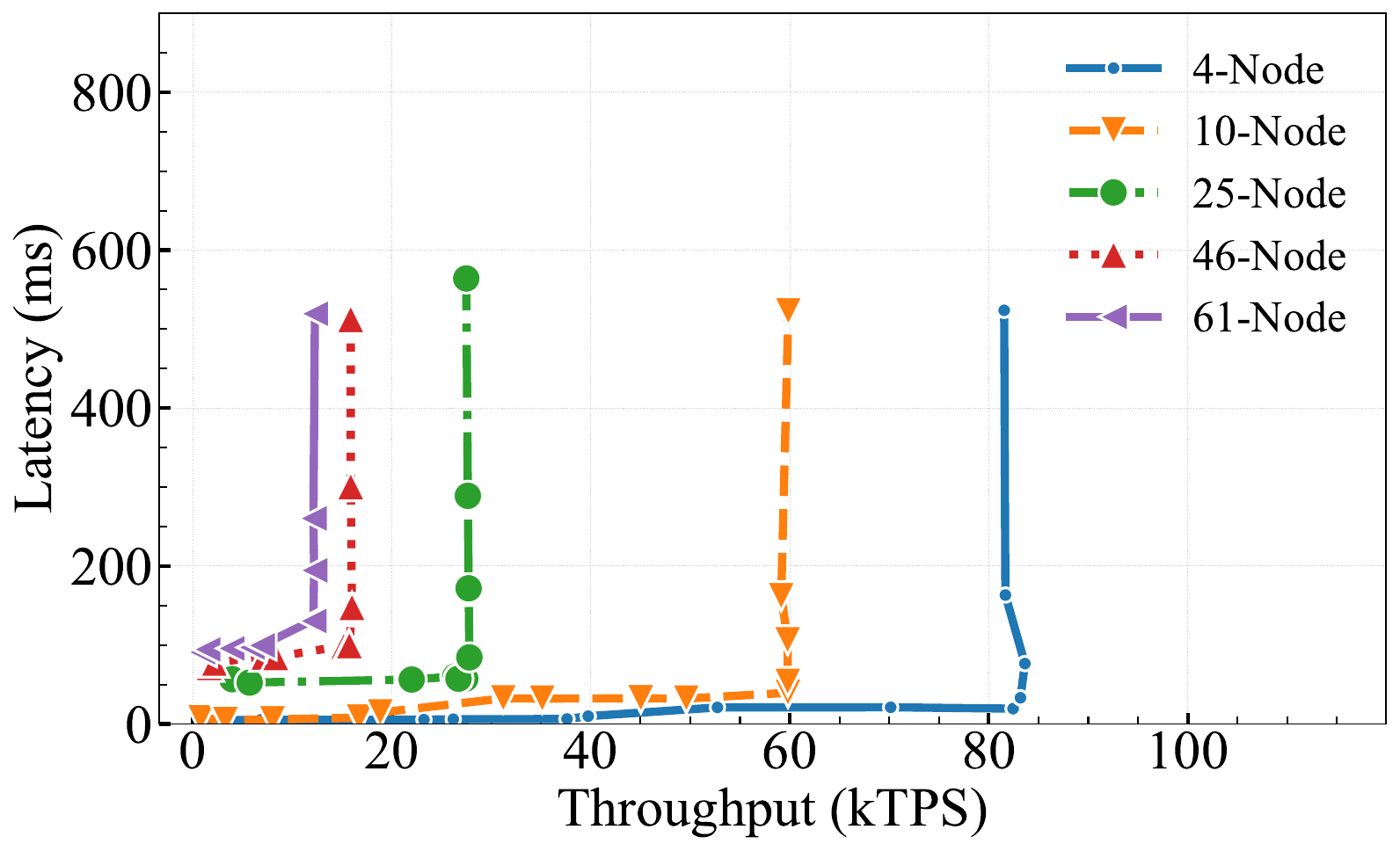}
        \caption{HotStuff (Write)}
    \end{subfigure}
    \hfill
    \begin{subfigure}[b]{0.31\textwidth}
        \centering
        \includegraphics[width=\textwidth]{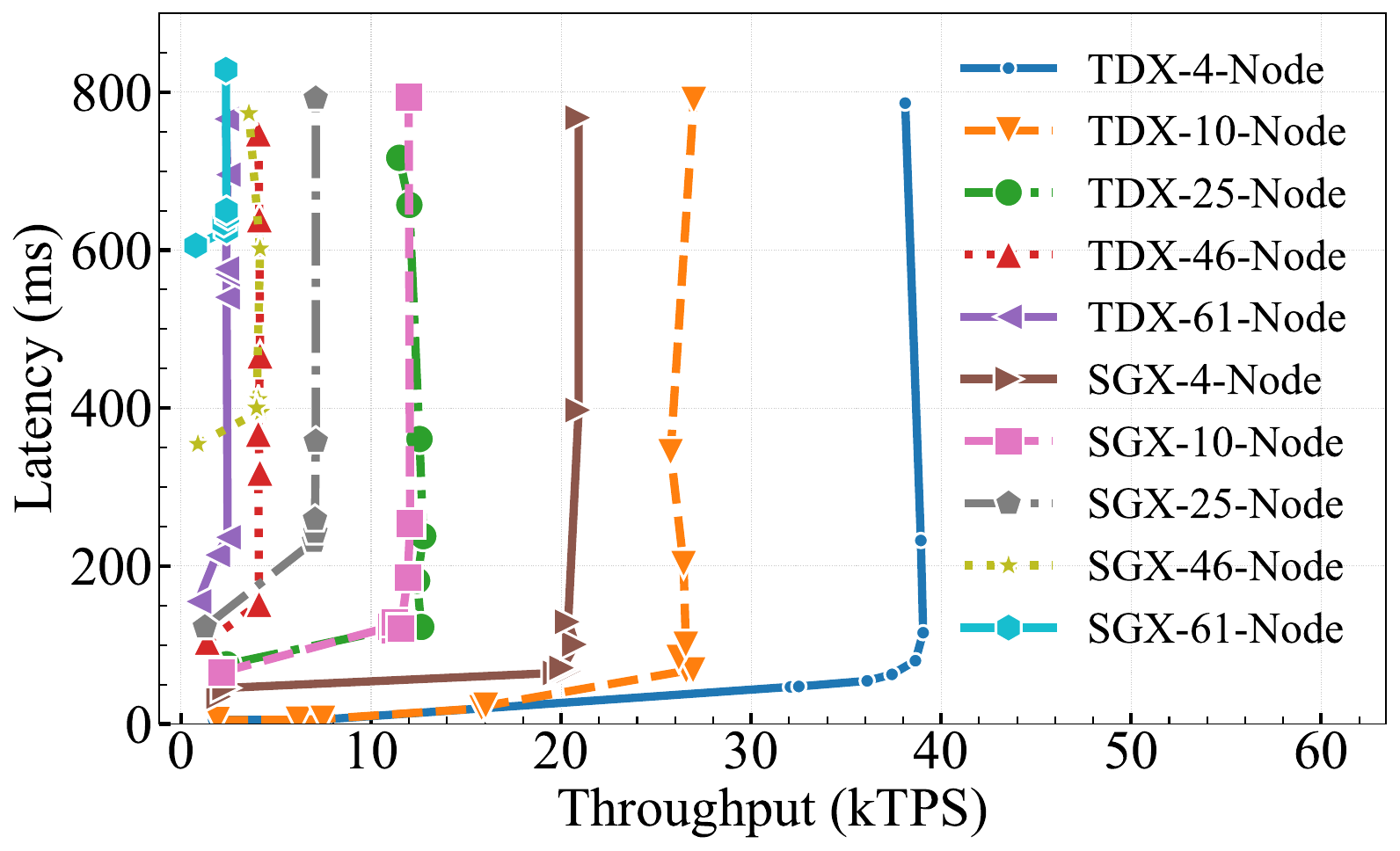}
        \caption{Ours (Write)}
    \end{subfigure}

    \vspace{1mm}

    \begin{subfigure}[b]{0.31\textwidth}
        \centering
        \includegraphics[width=\textwidth]{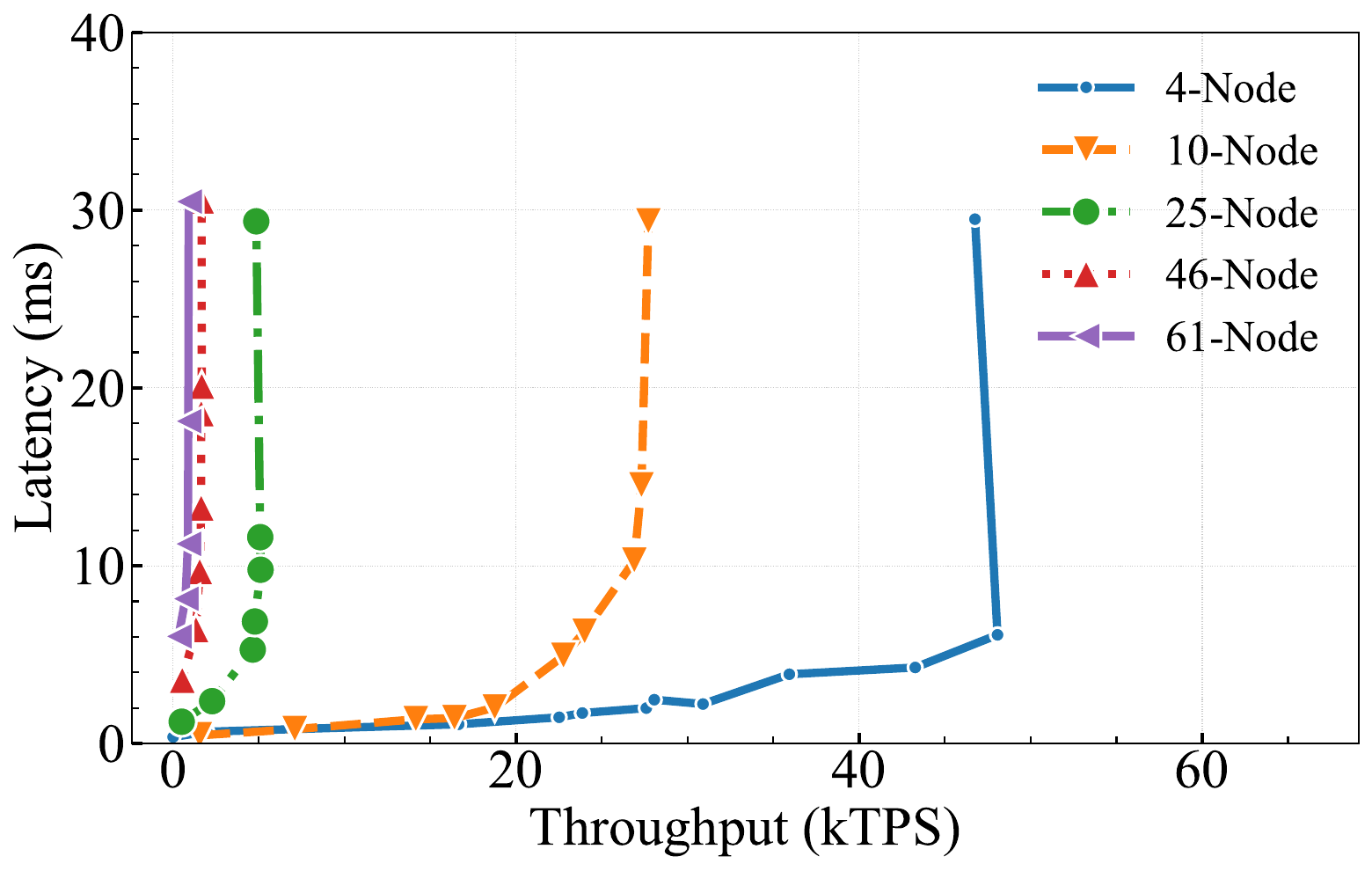}
        \caption{COBRA (Read)}
    \end{subfigure}
    \hfill
    \begin{subfigure}[b]{0.31\textwidth}
        \centering
        \includegraphics[width=\textwidth]{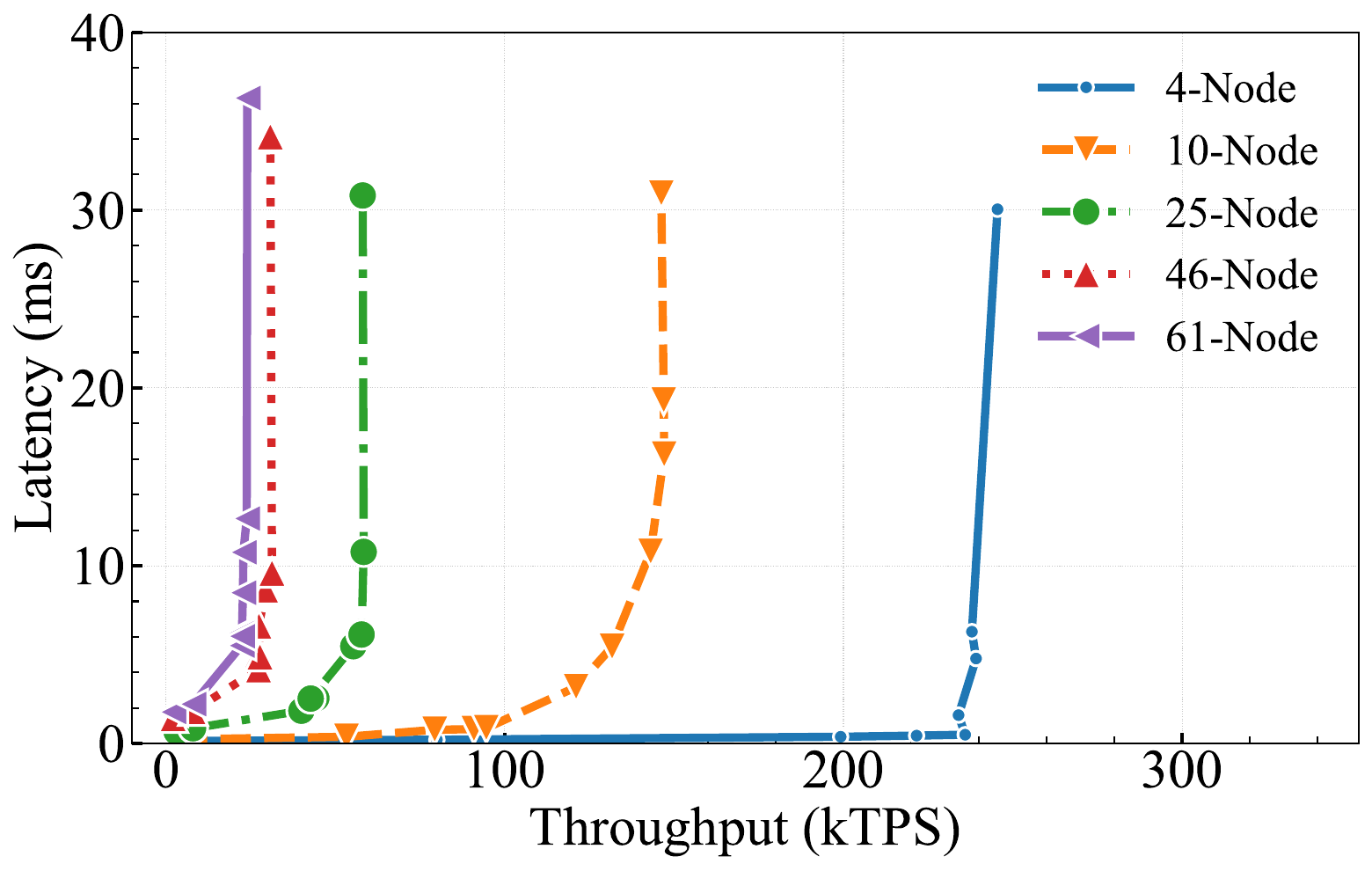}
        \caption{HotStuff (Read)}
    \end{subfigure}
    \hfill
    \begin{subfigure}[b]{0.31\textwidth}
        \centering
        \includegraphics[width=\textwidth]{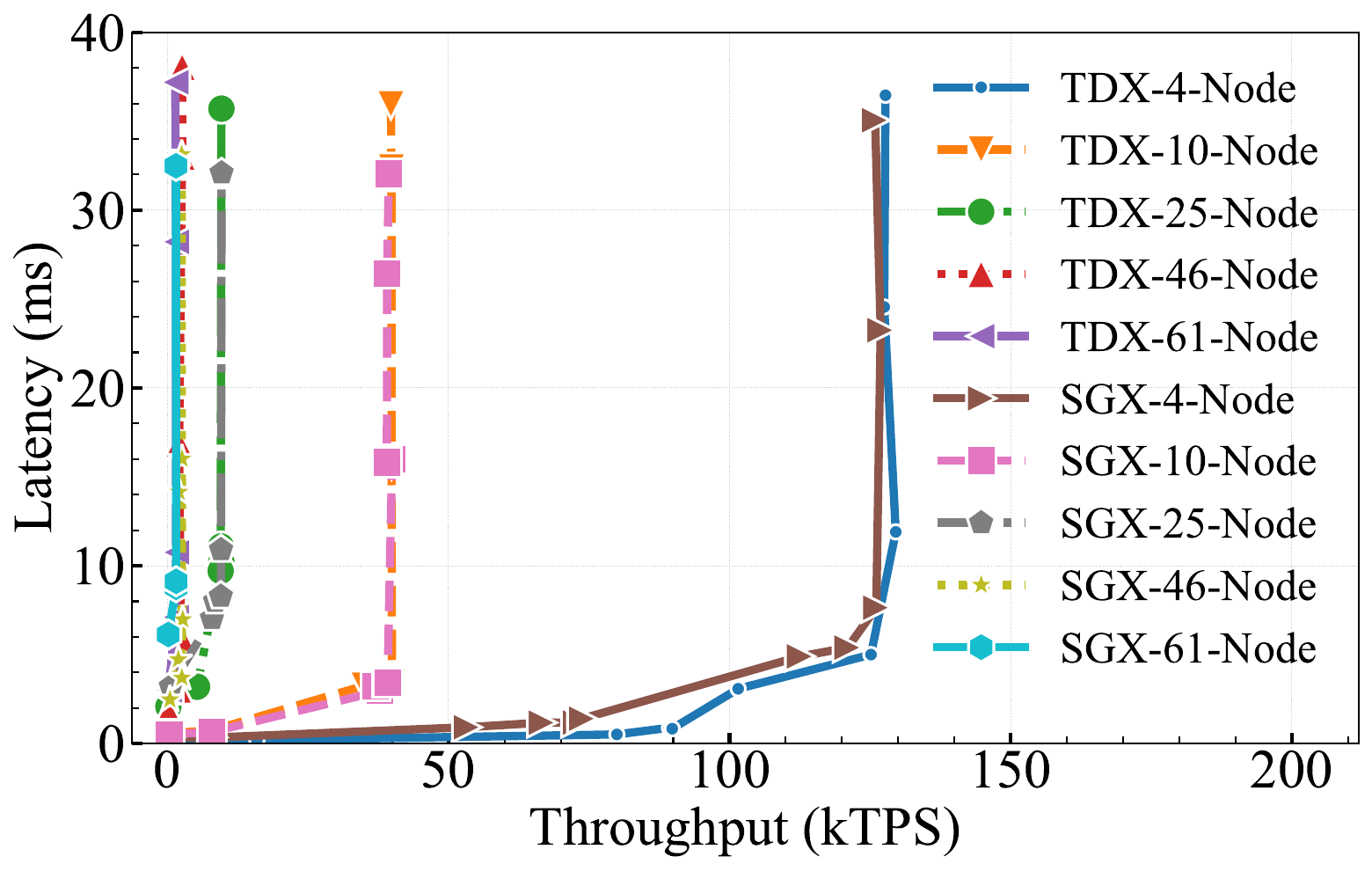}
        \caption{Ours (Read)}
    \end{subfigure}

    \caption{Write and read throughput/latency.}
    \label{fig:tee-dao_throughput}
    \vspace{-4mm}
\end{figure*}

\subsection{Implementation and Experiment Setup}
\label{experiment_setup}

\bheading{Implementation.} 
We implement a prototype of \sysname atop several commodity TEE implementations, i.e., Intel SGX, Intel TDX, and Hygon CSV, with the latest DPSS techniques, COBRA~\cite{vassantlalCOBRADynamicProactive2022}, and the versatile MPC compiler MP-SPDZ~\cite{kellerMPSPDZVersatileFramework2020}. We implement COBRA with linear secret sharing from scratch atop HotStuff~\cite{HotStuffLibhotstuff2025}. We also add functionality to adapt the secret sharing storage format to the Montgomery representation of MP-SPDZ for computation. The whole system comprises a total of 10,353 lines of C++ code.

\bheading{Experiment Setup.} We evaluate the performance of \sysname in two VM-based TEEs (i.e., Intel TDX and Hygon CSV) and process-based TEEs (i.e., Intel SGX). \sysname uses Intel Xeon Platinum 8369B CPU (2.7GHz) as Intel SGX platforms, Intel Xeon Emerald Rapids CPU (2.7GHz) for Intel TDX platforms, and Hygon C86-3G 7390 CPU (3.0 GHz) for Hygon CSV. For comparison, COBRA and HotStuff are deployed on a standard server equipped with an Intel Xeon Platinum 8575C CPU (3.2GHz). All platforms are configured with 8 vCPUs and 32GB of RAM, running a Linux kernel version above v5.10. The Intel SGX platform is configured with SGX v2.0 and 16GB of physical memory  (maximum EPC 16GB). The experiments were conducted in a controlled environment with a round-trip time (RTT) of 0.05–0.07ms and a bandwidth of 5 Gbps, representing a common deployment scenario for \sysname in cloud data centers.

All protocols are evaluated under identical settings. Each request carries a 1024-byte payload, and the consensus batch size is set to 400. For secret sharing, both COBRA and \sysname use Feldman’s linear commitment scheme~\cite{feldmanPracticalSchemeNoninteractive1987}. We conduct three sets of experiments to address the questions outlined above. We focus on Intel TDX and SGX for direct comparison, and defer the evaluation of CSV to Appendix~\ref{appen:csv_expriment} due to space constraints.

\subsection{KV Storage Performance}
\label{kvstorage_performance}
To address \textbf{Q1}, we evaluate the performance of \sysname as a KVS in terms of write and read requests.

\bheading{Write requests.} \figref{fig:tee-dao_throughput}.a-c presents the latency and throughput of three systems given write requests of 1KB secret data with 4 to 61 nodes. The results indicate that HotStuff achieves the highest throughput as it does not perform secret-sharing. Specifically, as the number of nodes increases from 4 to 61, HotStuff's throughput declines from 81 kTPS to 12 kTPS. In comparison, \sysnameTDX outperforms \sysnameSGX, benefiting from the larger computing memory provided by TDX VMs and its support for para-virtualization and SR-IOV \cite{IntelTDXConnect}. At $n=4$, \sysnameTDX achieves 54\% of HotStuff's throughput, while \sysnameSGX achieves 47\%. COBRA exhibits the lowest throughput, primarily due to the higher complexity of its consensus protocol and the overhead introduced by frequent locking in its implementation. For instance, at $n=4$, COBRA achieves only 9 kTPS, which is 11\% of HotStuff's throughput. 

With an increasing number of nodes, the overhead introduced by TEEs causes \sysname's throughput to decline more rapidly than that of the other systems. At $n=61$, \sysnameTDX achieves 20\% of HotStuff's throughput, while \sysnameSGX achieves 18\%. Nevertheless, both \sysnameTDX and \sysnameSGX significantly outperform COBRA. Specifically, \sysnameTDX achieves a throughput 1.8 times higher than COBRA, while \sysnameSGX achieves 1.6 times higher.

\begin{table}[t]
    \centering
    \setlength{\abovecaptionskip}{2pt}
    \renewcommand{\arraystretch}{0.95}
    \setlength{\tabcolsep}{4pt}
    \caption{Reshare latency (s) for storage with 1K entries.}
    \label{tab:reshare_breakdown_horizontal}

    \begin{tabular*}{\columnwidth}{@{\extracolsep{\fill}}lccccc@{}}
        \toprule
        System & Metric & $n=4$ & $n=16$ & $n=25$ & $n=31$ \\
        \midrule
        \multirow{2}{*}{TDX}
            & Poly. Gen.   & 1.899 & 10.296 & 21.845 & 32.392 \\
            & State Recon. & 0.466 & 1.345  & 2.546  & 3.763 \\
        \midrule
        \multirow{2}{*}{SGX}
            & Poly. Gen.   & 3.652 & 17.209 & 33.685 & 45.469 \\
            & State Recon. & 0.503 & 1.412  & 2.807  & 4.243 \\
        \midrule
        \multirow{2}{*}{COBRA}
            & Poly. Gen.   & 3.970 & 11.591 & 21.343 & 30.382 \\
            & State Recon. & 0.963 & 1.150  & 1.573  & 1.680 \\
        \bottomrule
    \end{tabular*}
    \vspace{-4mm}
\end{table}

\bheading{Read requests.} \figref{fig:tee-dao_throughput}.d-f illustrates the latency and throughput of read requests, which follow a similar trend to write requests. The throughput of \sysnameTDX and \sysnameSGX ranks second only to HotStuff and consistently outperforms COBRA. Unlike write operations, read operations do not involve consensus protocols, so I/O overhead and secret-share combination become the dominator. 

The experimental results show that as the number of nodes increases from 4 to 61, HotStuff's throughput decreases from 240 kTPS to 23 kTPS. Due to the lower communication and computational complexity of read operations, \sysnameSGX and \sysnameTDX achieve nearly identical performance. When $n=4$, both \sysnameTDX and \sysnameSGX achieve 45\% of HotStuff's throughput, while COBRA achieves only 19\%. However, as the number of nodes increases, the I/O overhead introduced by TEEs significantly impacts performance. When $n=61$, the throughput of \sysnameTDX and \sysnameSGX decreases to 1.5 kTPS, slightly higher than COBRA's throughput of 0.9 kTPS.

\begin{figure}[t]
    \centering
    \begin{subfigure}[b]{0.22\textwidth}
        \centering
        \includegraphics[width=\textwidth]{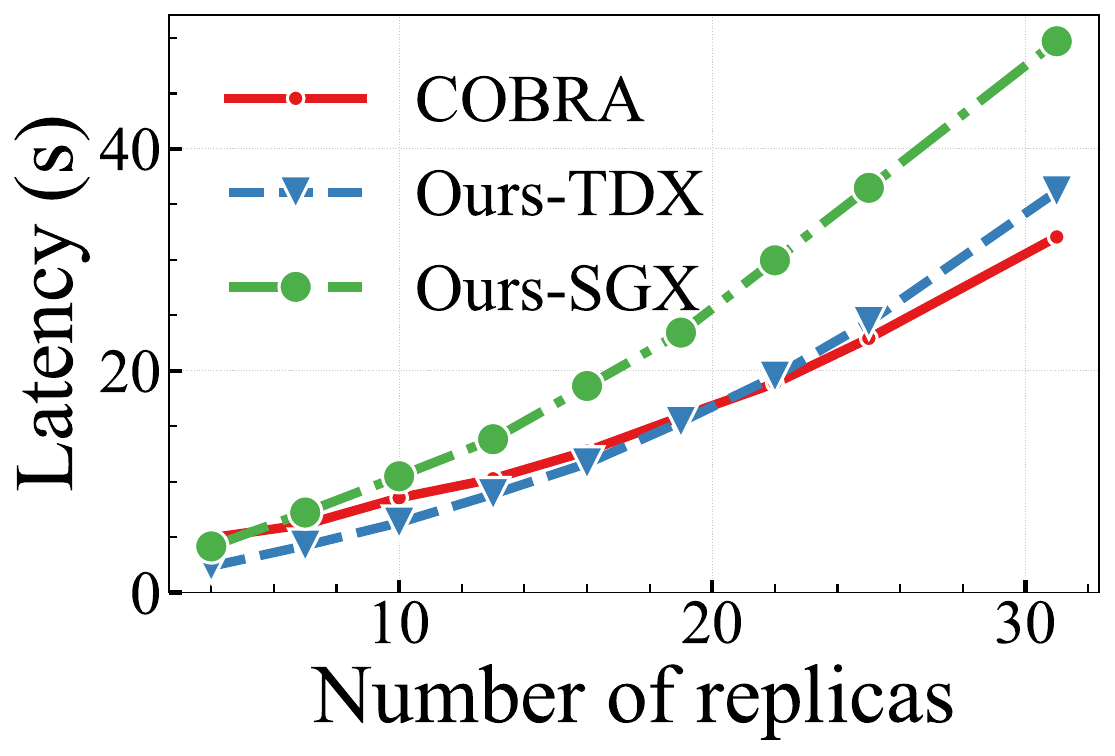}
        \caption{Reconfiguration Cost}
        \label{fig:tee-dao_reconfig_cost}
    \end{subfigure}
    \begin{subfigure}[b]{0.22\textwidth}
        \centering
        \includegraphics[width=\textwidth]{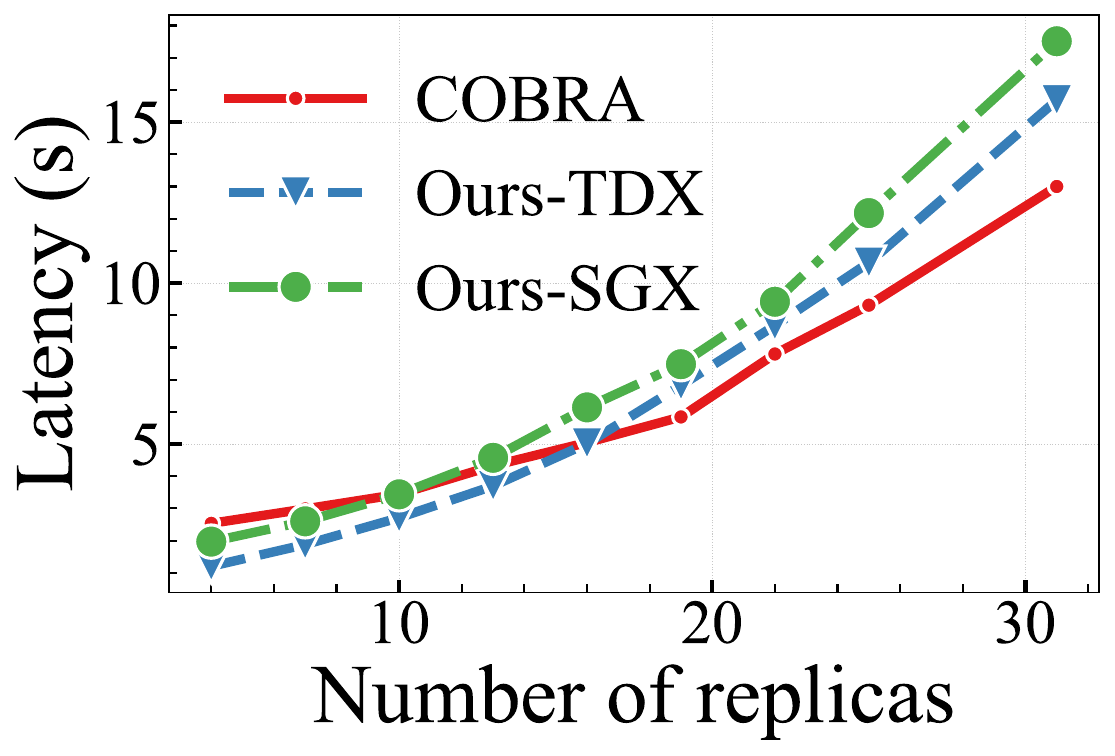}
        \caption{Recovery Cost}
        \label{fig:tee-dao_recovery_cost}
    \end{subfigure}
    \caption{Reconfiguration and recovery latency.}
    \label{fig:tee-dao_costs}
    \vspace{-4mm}
\end{figure}

\subsection{Management Cost}
\label{management_cost}
To address \textbf{Q2}, we evaluate the system management cost, including two key operations: polynomial generation during consensus for reconfiguration and secret state reconstruction. 
We measure the cost of adding (recovery) and removing (reconfiguration) a server from a system containing 1K secret data entries, with configurations ranging from 4 to 31 nodes. Since HotStuff does not provide confidentiality protection, our evaluation focuses on comparing \sysname and COBRA.

\bheading{Reconfiguration latency.} \figref{fig:tee-dao_reconfig_cost} presents the reconfiguration latency. When there are fewer than 22 nodes, \sysnameTDX outperforms COBRA in shorter latency, while \sysnameSGX achieves lower overhead than COBRA only when $n = 4$. This performance advantage is attributed to \sysname's foundation on HotStuff, which has lower protocol complexity compared to the BFT-SMaRt system used by COBRA~\cite{sousaByzantineConsensusBFT2012}. Specifically, when $n = 4$, \sysnameTDX achieves 47\% of COBRA's reconfiguration latency, and \sysnameSGX achieves 84\%. 

When the number of nodes increases, the I/O overhead introduced by TEEs significantly impacts \sysname's performance. \tabref{tab:reshare_breakdown_horizontal} shows that the time required for polynomial generation increases significantly with the number of nodes, surpassing that of COBRA. Moreover, the growth rate of \sysnameSGX is faster than that of \sysnameTDX. When $n = 22$, the reconfiguration latency of \sysnameTDX begins to exceed that of COBRA.

\begin{table}[t]
    \centering
    \setlength{\abovecaptionskip}{2pt}
    \renewcommand{\arraystretch}{0.95}
    \setlength{\tabcolsep}{4pt}
    \caption{Recovery latency (s) for storage with 1K entries.}
    \label{tab:recovery_breakdown_horizontal}

    \begin{tabular*}{\columnwidth}{@{\extracolsep{\fill}}lccccc@{}}
        \toprule
        System & Metric & $n=4$ & $n=16$ & $n=25$ & $n=31$ \\
        \midrule
        \multirow{2}{*}{TDX}
            & Poly. Gen.   & 1.026 & 4.606 & 9.788 & 14.414 \\
            & State Recon. & 0.179 & 0.411 & 0.845 & 1.265 \\
        \midrule
        \multirow{2}{*}{SGX}
            & Poly. Gen.   & 1.665 & 5.128 & 10.226 & 14.620 \\
            & State Recon. & 0.305 & 1.011 & 1.947 & 2.890 \\
        \midrule
        \multirow{2}{*}{COBRA}
            & Poly. Gen.   & 2.267 & 4.311 & 8.004 & 11.347 \\
            & State Recon. & 0.268 & 0.736 & 1.308 & 1.652 \\
        \bottomrule
    \end{tabular*}

    \vspace{-2mm}
\end{table}

\bheading{Recovery latency.} \figref{fig:tee-dao_recovery_cost} illustrates the recovery latency for a faulty node in clusters containing 4 to 31 nodes. The results show that when the number of nodes is fewer than 16, \sysnameTDX achieves a shorter recovery time compared to COBRA. Additionally, \sysnameSGX slightly outperforms COBRA when the number of nodes is fewer than 13, while still exhibiting higher recovery latency than \sysnameTDX. However, as the number of nodes increases, COBRA gradually outperforms both \sysnameTDX and \sysnameSGX. 

\tabref{tab:recovery_breakdown_horizontal} shows that the polynomial generation time for \sysname increases rapidly with the increasing number of nodes, with \sysnameSGX exhibiting a faster growth rate than \sysnameTDX. At $n = 31$, the recovery latency of \sysnameTDX (resp. \sysnameSGX) is 20\% (resp. 34\%) higher than COBRA.

\subsection{Computation Overhead}
\label{computation_efficiency}

\begin{figure}[t]
    \centering
    \includegraphics[width=.47\textwidth]{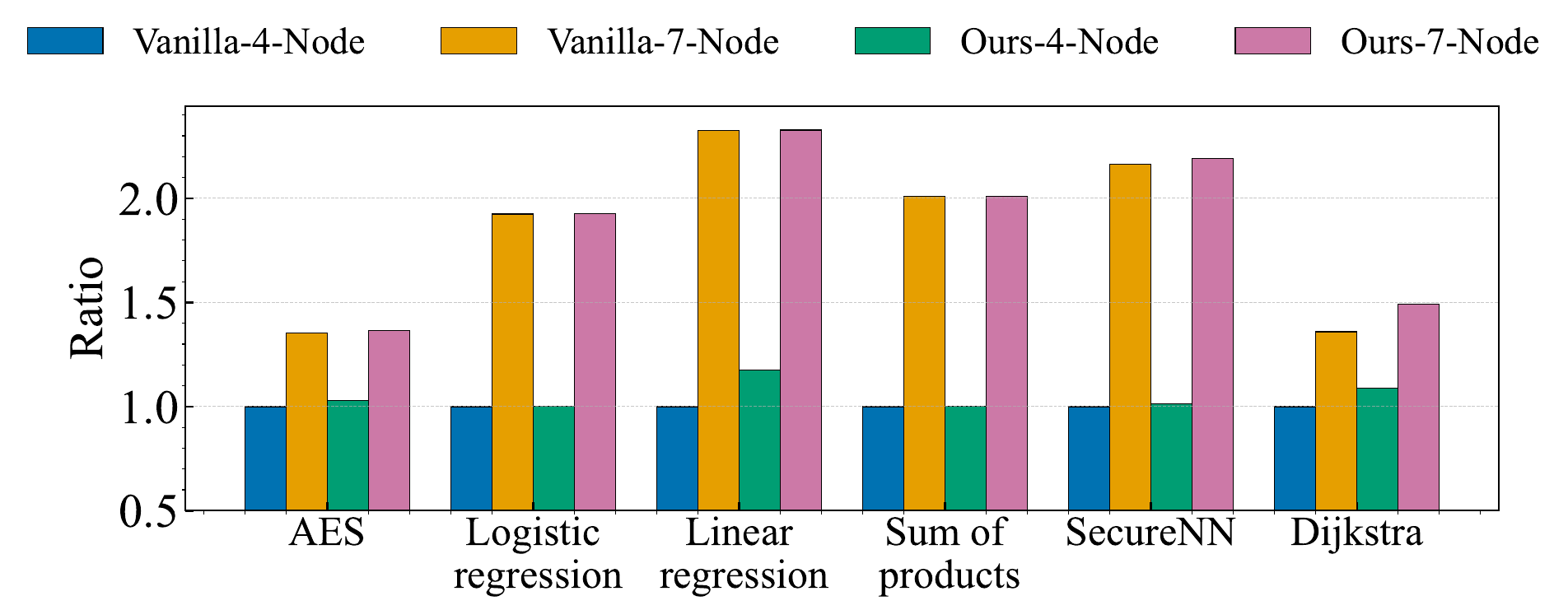}
    \caption{Normalized computational cost.}
    \label{fig:tee-dao_comp_cost}
    \vspace{-4mm}
\end{figure}

To address \textbf{Q3}, we compare \sysnameTDX with vanilla MPC using Shamir's secret sharing protocol under both malicious adversary and honest majority models. \figref{fig:tee-dao_comp_cost} presents a normalized comparison of computation time between \sysname and vanilla MPC, using the computation time of vanilla MPC with $n = 4$ as the baseline. The experiment evaluates both systems under $n = 4$ and $n = 7$, leveraging benchmark tasks provided by the MP-SPDZ library~\cite{kellerMPSPDZVersatileFramework2020}. These tasks include AES encryption, Logistic Regression, Linear Regression, Sum of Products, SecureNN, and the Dijkstra algorithm.

The results demonstrate that \sysname introduces low additional latency for most computation tasks, indicating acceptable performance overhead. Specifically, for tasks such as SecureNN and Sum of Products, \sysname incurs negligible overhead. For other tasks, such as AES encryption and the Dijkstra algorithm, the performance overhead remains below 10\%. Although the overhead for Logistic Regression is slightly higher, it is still controlled within 18\%. These data indicate that \sysname introduces relatively low performance overhead, even for computationally intensive tasks, making it a practical solution for secure collaborative computation.

\subsection{Performance under Heterogeneous TEEs}
\label{heterogeneous_performance}
To address \textbf{Q4}, we further evaluate \sysname under mixed-TEE deployments. The 4-node committee contains one SGX, one CSV, and two TDX nodes, while the 7-node committee contains two SGX, two CSV, and three TDX nodes. These TEEs differ substantially in their isolation models and virtualization architectures, allowing us to evaluate whether \sysname can operate efficiently across heterogeneous trust domains without protocol redesign.

\newlength{\ncolwidth}
\setlength{\ncolwidth}{8pt} 
\begin{table}[t]
\centering
\footnotesize
\caption{Heterogeneous committee performance.}
\label{tab:hetero_eval}
\renewcommand{\arraystretch}{1.0}
\setlength{\tabcolsep}{3pt}
\scalebox{0.97}{
\begin{tabular*}{\columnwidth}{@{\extracolsep{\fill}}c p{\ncolwidth} cccc@{}}
\toprule
System & \centering $n$ &
Write (kTPS/ms) & Read (kTPS/ms) & Recovery (s) & Reconfig. (s)\\
\midrule
SGX    & \multirow{4}{*}{\centering 4}
        & 20.2/71.1 & 115.1/8.0 & 1.970 & 4.155 \\
CSV    &
        & 22.5/64.1 & 120.1/3.6 & 2.133 & 4.078 \\
TDX    &
        & 36.1/54.5 & 122.8/5.3 & 1.205 & 2.365 \\
Hetero &
        & 25.1/70.4 & 121.0/5.8 & 1.731 & 3.251 \\
\midrule
SGX    & \multirow{4}{*}{\centering 7}
        & 14.4/119.6 & 57.4/2.6 & 2.600 & 7.208 \\
CSV    &
        & 18.4/85.6 & 61.7/4.3 & 3.262 & 7.513 \\
TDX    &
        & 33.5/64.3 & 62.7/1.9 & 1.879 & 4.221 \\
Hetero &
        & 20.7/79.0 & 61.2/2.0 & 2.586 & 5.541 \\
\bottomrule
\end{tabular*}}
\vspace{-3mm}
\end{table}

Table~\ref{tab:hetero_eval} shows that heterogeneous deployment performance falls between the best homogeneous deployment (\sysnameTDX) and the slower homogeneous deployment (\sysnameSGX). This behavior is expected because consensus, share maintenance, and reconfiguration involve all committee members, causing the slowest TEE type to partially determine the critical path. For example, TeeDAO-Hetero achieves 25.1~kTPS write throughput at $n=4$ and 20.7~kTPS at $n=7$, both remaining between \sysnameSGX and \sysnameTDX. Recovery and reconfiguration exhibit similar behavior. By contrast, read performance is less affected because reads do not require consensus.

\section{Application Cases}
\label{sec:apps}
We illustrate how \sysname supports common privacy-sensitive services by presenting three detailed cases of secret storage~\cite{AWSKeyManagement, msmbaldwinAzureKeyVaulta}, key custody/signing~\cite{SafeWallet, fireblocks}, and cross-organization analytics~\cite{DBLP:conf/asiacrypt/LepointPRST21}. 

\bheading{Secret KV Store.}
\label{sec:app-kv}
Secret KV store provides APIs to store and retrieve secret values, often with versioning and rotation~\cite{AWSKeyManagement, msmbaldwinAzureKeyVaulta}. It can be realized by \sysname by storing each item as proactively refreshed shares across a heterogeneous committee. A client can (i) upload a plaintext secret to an admitted enclave, which immediately shares it, or (ii) client-split into shares and upload only shares. Storage uses \texttt{write(k, share)}; retrieval uses \texttt{read(k)} plus a policy-controlled reconstruction workflow that avoids concentrating a long-lived secret at any single replica, while allowing users to constrain which TEE types may hold shares.

\bheading{Multisig Wallet.}
\label{sec:app-wallet}
For digital wallet and transaction signing, a widely deployed pattern is split-key authorization: control is distributed either via on-chain multisig wallets or via MPC custody services where the private key is never present in full and is represented by distributed shares~\cite{SafeWallet,fireblocks}.
\sysname generalizes the split-key pattern to long-running, reconfigurable committees under heterogeneity constraints. The signing key is stored as shares (\texttt{write}); a signing request triggers an \texttt{execute} workflow where committee members produce signature shares inside admitted TEEs and the leader combines them into a full signature without reconstructing the private key. This supports proactive refresh and reconfiguration, which are not addressed by static multisig or single-domain KMS-style custody.

\bheading{Privacy-Preserving Data Analytics.}
\label{sec:app-analytics}
Cross-organization analytics increasingly use MPC-style designs to compute aggregates without sharing raw records, e.g., Google Private Join and Compute~\cite{DBLP:conf/asiacrypt/LepointPRST21}.
\sysname supports this as a stateful service: each party uploads data as shared state (\texttt{write}); an analytics job runs as an \texttt{execute} task compiled into an MPC protocol by the committee; outputs are released via \texttt{read} under policy. Compared with pure MPC deployments, \sysname provides a uniform interface and long-running committee management (admission by attested identity, heterogeneity constraints, and proactive refresh) as the platform layer.

\section{Related Works} \label{Related}

\bheading{Rollback resilience.} 
In rollback attacks, an adversary reverts the enclave’s persistent state (or the external storage it relies on) to an older, valid state, leading to violation of state freshness/integrity guarantees. Distributed TEE nodes can assist one another in preserving the freshness of their states when not all TEEs are crashed. ROTE~\cite{mateticROTERollbackProtection2017} distributes rollback protection across multiple SGX processors via a broadcast protocol, while Narrator~\cite{niuNARRATORSecurePractical2022, narrator-pro, tiks} and Nimble~\cite{angelNimbleRollbackProtection2023} rely on a TEE-backed or blockchain-backed ledger to detect storage rollback. RR and its TEEMS metadata service~\cite{dinisRRFaultModel2023} formalize the restart–rollback fault model and adapt crash-tolerant replication protocols to TEEs with external state. Achilles~\cite{Achilles} realizes roll-back resilient recovery for consensus protocols. 

\bheading{Availability enhancement.} TEE-replicated services execute consensus or application logic inside enclaves to enhance availability in the presence of faulty hosts: Engraft~\cite{wangENGRAFTEnclaveguardedRaft2022} protects Raft with SGX, and CCF~\cite{DBLP:journals/pvldb/HowardAACCCDFJK23} runs replicated services in TEEs backed by an auditable ledger. These systems largely assume that each TEE provides integrity and confidentiality guarantees, focusing on host crashes or protocol faults rather than complete TEE compromises.

\bheading{Heterogeneous TEE systems for compromised TEEs.} These systems assume a stronger adversary that may fully break a few particular TEE implementations and therefore distribute trust across different TEE types. Dauterman et al.~\cite{emmadautermanReflectionsTrustingDistributed2022} propose splitting cryptographic trust across independent hardware roots so that security holds as long as at least one root remains honest. Connell et al. ~\cite{connellSecretKeyRecovery2024} propose SVR3 for key recovery by storing users’ key material across Nitro, SEV-SNP, and SGX enclaves in different clouds. Existing systems cannot handle \textit{adaptive mobile adversaries}. In contrast, \sysname provides a general framework to manage membership, resharing, and revocation of heterogeneous TEE committees for long-lived application state. 

\bheading{Hybrid Systems of TEE and MPC.}
Several hybrid systems~\cite{riazi2018chameleon, luCorrelatedRandomnessTeleportation2021, wuHybridTrustMultiparty2022, huEfficientPracticalMultiparty2025} combine TEEs and MPC to leverage the strengths of both paradigms. Specifically, prior work~\cite{riazi2018chameleon, luCorrelatedRandomnessTeleportation2021} treats TEE-enabled nodes as semi-honest MPC parties, using TEE integrity guarantees to accelerate MPC. Later, Wu et al.~\cite{wuHybridTrustMultiparty2022} and Hu et al.~\cite{huEfficientPracticalMultiparty2025} further adapt the division of tasks between TEE and MPC according to security and cost constraints. 
However, they do not consider compromised TEEs. In contrast, \sysname targets long-lived MPC services under a stronger threat model of \textit{adaptive mobile adversaries}, tolerating full compromise of any TEE type and its host.

\section{Conclusion} \label{Conclusion}
We propose \sysname, a three-layer framework that manages a committee of heterogeneous TEEs as a distributed-trust infrastructure for long-running confidential services. It also 
provides a minimal,  uniform API surface to support various confidential computing services such as KVS, cryptocurrency wallets, and multi-party collaborative analytics services. Our experimental evaluation results reveal that \sysname achieved high efficiency in KV Store and management while imposing a reasonable overhead on system management and MPC computation execution.

\clearpage

\bibliographystyle{IEEEtran}
\bibliography{ref}

\begin{thebibliography}{100}
\providecommand{\url}[1]{#1}
\csname url@samestyle\endcsname
\providecommand{\newblock}{\relax}
\providecommand{\bibinfo}[2]{#2}
\providecommand{\BIBentrySTDinterwordspacing}{\spaceskip=0pt\relax}
\providecommand{\BIBentryALTinterwordstretchfactor}{4}
\providecommand{\BIBentryALTinterwordspacing}{\spaceskip=\fontdimen2\font plus
\BIBentryALTinterwordstretchfactor\fontdimen3\font minus \fontdimen4\font\relax}
\providecommand{\BIBforeignlanguage}[2]{{%
\expandafter\ifx\csname l@#1\endcsname\relax
\typeout{** WARNING: IEEEtran.bst: No hyphenation pattern has been}%
\typeout{** loaded for the language `#1'. Using the pattern for}%
\typeout{** the default language instead.}%
\else
\language=\csname l@#1\endcsname
\fi
#2}}
\providecommand{\BIBdecl}{\relax}
\BIBdecl

\bibitem{IntelSoftwareGuard}
``Intel{\textregistered} software guard extensions (intel{\textregistered} sgx),'' https://www.intel.com/content/www/us/en/architecture-and-technology/software-guard-extensions.html, 2015.

\bibitem{IntelTrustDomain}
``Intel{\textregistered} trust domain extensions (intel{\textregistered} tdx),'' https://www.intel.com/content/www/us/en/developer/tools/trust-domain-extensions/overview.html, 2023.

\bibitem{ltdArmConfidentialCompute}
``Arm confidential compute architecture,'' https://www.arm.com/architecture/security-features/arm-confidential-compute-architecture, 2021.

\bibitem{AMDSecureEncrypted}
``Amd secure encrypted virtualization (sev),'' https://www.amd.com/en/developer/sev.html, 2016.

\bibitem{DocumentationX86Hygonsecurevirtualizationrst}
\BIBentryALTinterwordspacing
``Hygon secure virtualization,'' 2023. [Online]. Available: \url{https://gitee.com/anolis/cloud-kernel/blob/devel-5.10/Documentation/x86/hygon-secure-virtualization.rst}
\BIBentrySTDinterwordspacing

\bibitem{ju-shimAzureConfidentialComputing}
``Azure confidential computing overview,'' https://learn.microsoft.com/en-us/azure/confidential-computing/overview, 2024.

\bibitem{AWSNitroEnclaves}
``Nitro enclaves,'' https://aws.amazon.com/ec2/nitro/nitro-enclaves/, 2020.

\bibitem{GoogleConfidentialComputing}
\BIBentryALTinterwordspacing
``Google confidential computing,'' 2024. [Online]. Available: \url{https://cloud.google.com/security/products/confidential-computing}
\BIBentrySTDinterwordspacing

\bibitem{DBLP:conf/sp/SchusterCFGPMR15}
F.~Schuster, M.~Costa, C.~Fournet, C.~Gkantsidis, M.~Peinado, G.~Mainar{-}Ruiz, and M.~Russinovich, ``{VC3:} trustworthy data analytics in the cloud using {SGX},'' in \emph{Proc. of {IEEE} {S\&P}}, 2015.

\bibitem{DBLP:conf/sp/PriebeVC18}
C.~Priebe, K.~Vaswani, and M.~Costa, ``Enclavedb: {A} secure database using {SGX},'' in \emph{Proc. of {IEEE} {S\&P}}, 2018.

\bibitem{ohrimenkoObliviousMultiPartyMachine2016}
O.~Ohrimenko, F.~Schuster, C.~Fournet, A.~Mehta, S.~Nowozin, K.~Vaswani, and M.~Costa, ``Oblivious multi-party machine learning on trusted processors,'' in \emph{Proc. of {USENIX} Security}, 2016.

\bibitem{DBLP:conf/uss/HerwigGL20}
S.~Herwig, C.~Garman, and D.~Levin, ``Achieving keyless cdns with conclaves,'' in \emph{Proc. of {USENIX} Security}, 2020.

\bibitem{DBLP:conf/ccs/FischVBG17}
B.~Fisch, D.~Vinayagamurthy, D.~Boneh, and S.~Gorbunov, ``{IRON:} functional encryption using intel {SGX},'' in \emph{Proc. of {ACM} {CCS}}, 2017.

\bibitem{chengEkidenPlatformConfidentialitypreserving2019}
R.~Cheng, F.~Zhang, J.~Kos, W.~He, N.~Hynes, N.~M. Johnson, A.~Juels, A.~Miller, and D.~Song, ``Ekiden: {A} platform for confidentiality-preserving, trustworthy, and performant smart contracts,'' in \emph{Proc. of {IEEE} EuroS{\&}P}, 2019.

\bibitem{lindTeechainSecurePayment2019}
J.~Lind, O.~Naor, I.~Eyal, F.~Kelbert, E.~G. Sirer, and P.~R. Pietzuch, ``Teechain: a secure payment network with asynchronous blockchain access,'' in \emph{Proc. of {ACM} {SOSP}}, 2019.

\bibitem{zhangTownCrierAuthenticated2016}
F.~Zhang, E.~Cecchetti, K.~Croman, A.~Juels, and E.~Shi, ``Town crier: An authenticated data feed for smart contracts,'' in \emph{Proc. of {ACM} {CCS}}, 2016.

\bibitem{mateticBITEBitcoinLightweight2019}
S.~Matetic, K.~W{\"{u}}st, M.~Schneider, K.~Kostiainen, G.~Karame, and S.~Capkun, ``{BITE:} bitcoin lightweight client privacy using trusted execution,'' in \emph{Proc. of {USENIX} Security}, 2019.

\bibitem{mecury}
X.~Wen, Q.~Feng, J.~Niu, Y.~Zhang, and C.~Feng, ``Mercury: Practical cross-chain exchange via trusted hardware,'' \emph{{IEEE} Trans. Dependable Secur. Comput.}, vol.~23, no.~2, pp. 2949--2961, 2026.

\bibitem{TeeRollup}
X.~Wen, Q.~Feng, H.~Lyu, J.~Niu, Y.~Zhang, and C.~Feng, ``Teerollup: Efficient rollup design using heterogeneous {TEE},'' \emph{{IEEE} Trans. Computers}, vol.~74, no.~10, pp. 3546--3558, 2025.

\bibitem{chenSgxPectreStealingIntel2019}
G.~Chen, S.~Chen, Y.~Xiao, Y.~Zhang, Z.~Lin, and T.~Lai, ``Sgxpectre: Stealing intel secrets from {SGX} enclaves via speculative execution,'' in \emph{Proc. of {IEEE} EuroS{\&}P}, 2019.

\bibitem{bulckForeshadowExtractingKeys2018}
J.~V. Bulck, M.~Minkin, O.~Weisse, D.~Genkin, B.~Kasikci, F.~Piessens, M.~Silberstein, T.~F. Wenisch, Y.~Yarom, and R.~Strackx, ``Foreshadow: Extracting the keys to the intel {SGX} kingdom with transient out-of-order execution,'' in \emph{Proc. of {USENIX} Security}, 2018.

\bibitem{liSystematicLookCiphertext2022}
M.~Li, L.~Wilke, J.~Wichelmann, T.~Eisenbarth, R.~Teodorescu, and Y.~Zhang, ``A systematic look at ciphertext side channels on {AMD} {SEV-SNP},'' in \emph{Proc. {IEEE} {S\&P}}, 2022.

\bibitem{buhrenInsecureProvenUpdated2019}
R.~Buhren, C.~Werling, and J.~Seifert, ``Insecure until proven updated: Analyzing {AMD} sev's remote attestation,'' in \emph{Proc. of {ACM} {CCS}}, 2019.

\bibitem{buhrenOneGlitchRule2021}
R.~Buhren, H.~N. Jacob, T.~Krachenfels, and J.~Seifert, ``One glitch to rule them all: Fault injection attacks against amd's secure encrypted virtualization,'' in \emph{Proc. of {ACM} {CCS}}, 2021.

\bibitem{DBLP:conf/uss/LippGSMM16}
M.~Lipp, D.~Gruss, R.~Spreitzer, C.~Maurice, and S.~Mangard, ``Armageddon: Cache attacks on mobile devices,'' in \emph{Proc. of {USENIX} Security}, 2016.

\bibitem{hahnelHighResolutionSideChannels2017}
M.~H{\"{a}}hnel, W.~Cui, and M.~Peinado, ``High-resolution side channels for untrusted operating systems,'' in \emph{Proc. of {USENIX} {ATC}}, 2017.

\bibitem{leeInferringFinegrainedControl2017}
S.~Lee, M.~Shih, P.~Gera, T.~Kim, H.~Kim, and M.~Peinado, ``Inferring fine-grained control flow inside {SGX} enclaves with branch shadowing,'' in \emph{Proc. of {USENIX} Security}, 2017.

\bibitem{leeHackingDarknessReturnoriented2017}
J.~Lee, J.~S. Jang, Y.~Jang, N.~Kwak, Y.~Choi, C.~Choi, T.~Kim, M.~Peinado, and B.~B. Kang, ``Hacking in darkness: Return-oriented programming against secure enclaves,'' in \emph{Proc. of {USENIX} Security}, 2017.

\bibitem{bulckTellingYourSecrets2017}
J.~V. Bulck, N.~Weichbrodt, R.~Kapitza, F.~Piessens, and R.~Strackx, ``Telling your secrets without page faults: Stealthy page table-based attacks on enclaved execution,'' in \emph{Proc. of {USENIX} Security}, 2017.

\bibitem{DBLP:conf/ccs/WilkeS024}
L.~Wilke, F.~Sieck, and T.~Eisenbarth, ``Tdxdown: Single-stepping and instruction counting attacks against intel {TDX},'' in \emph{Proc. of {ACM} {CCS}}, 2024.

\bibitem{DBLP:conf/uss/RauscherWW0G25}
F.~Rauscher, L.~Wilke, H.~Weissteiner, T.~Eisenbarth, and D.~Gruss, ``Tdxploit: Novel techniques for single-stepping and cache attacks on intel {TDX},'' in \emph{Proc. of {USENIX} Security}, 2025.

\bibitem{emmadautermanReflectionsTrustingDistributed2022}
E.~Dauterman, V.~Fang, N.~Crooks, and R.~A. Popa, ``Reflections on trusting distributed trust,'' in \emph{Proc. of HotNets}, 2022.

\bibitem{kocherSpectreAttacksExploiting2019}
P.~Kocher, J.~Horn, A.~Fogh, D.~Genkin, D.~Gruss, W.~Haas, M.~Hamburg, M.~Lipp, S.~Mangard, T.~Prescher, M.~Schwarz, and Y.~Yarom, ``Spectre attacks: Exploiting speculative execution,'' in \emph{Proc. of {IEEE} {S\&P}}, 2019.

\bibitem{lippMeltdownReadingKernel2018}
M.~Lipp, M.~Schwarz, D.~Gruss, T.~Prescher, W.~Haas, A.~Fogh, J.~Horn, S.~Mangard, P.~Kocher, D.~Genkin, Y.~Yarom, and M.~Hamburg, ``Meltdown: Reading kernel memory from user space,'' in \emph{Proc. of {USENIX} Security}, 2018.

\bibitem{jauernigTrustedExecutionEnvironments2020}
P.~Jauernig, A.~Sadeghi, and E.~Stapf, ``Trusted execution environments: Properties, applications, and challenges,'' \emph{{IEEE} Secur. Priv.}, vol.~18, no.~2, pp. 56--60, 2020.

\bibitem{liSoKUnderstandingDesign2024}
M.~Li, Y.~Yang, G.~Chen, M.~Yan, and Y.~Zhang, ``Sok: Understanding design choices and pitfalls of trusted execution environments,'' in \emph{Proc. of the 19th {ACM} {AsiaCCS}}, 2024.

\bibitem{woodDenialServiceSensor2002a}
A.~D. Wood and J.~A. Stankovic, ``Denial of service in sensor networks,'' \emph{Computer}, vol.~35, no.~10, pp. 54--62, 2002.

\bibitem{zhouSideChannelAttacksTen2005}
Y.~Zhou and D.~Feng, ``Side-channel attacks: Ten years after its publication and the impacts on cryptographic module security testing,'' \emph{{IACR} Cryptol. ePrint Arch.}, p. 388, 2005.

\bibitem{connellSecretKeyRecovery2024}
G.~Connell, V.~Fang, R.~Schmidt, E.~Dauterman, and R.~A. Popa, ``Secret key recovery in a global-scale end-to-end encryption system,'' in \emph{Proc. of {USENIX} {OSDI}}, 2024.

\bibitem{SignalappSecureValueRecovery22025}
\BIBentryALTinterwordspacing
``Signalapp/securevaluerecovery2,'' 2025. [Online]. Available: \url{https://github.com/signalapp/SecureValueRecovery2}
\BIBentrySTDinterwordspacing

\bibitem{DBLP:conf/ccs/BentovJ0BDJ19}
I.~Bentov, Y.~Ji, F.~Zhang, L.~Breidenbach, P.~Daian, and A.~Juels, ``Tesseract: Real-time cryptocurrency exchange using trusted hardware,'' in \emph{Proc. of {ACM} {CCS}}, 2019.

\bibitem{herzbergProactiveSecretSharing1995}
A.~Herzberg, S.~Jarecki, H.~Krawczyk, and M.~Yung, ``Proactive secret sharing or: How to cope with perpetual leakage,'' in \emph{Proc. of {CRYPTO}}, 1995.

\bibitem{canettiAdaptiveSecurityThreshold1999}
R.~Canetti, R.~Gennaro, S.~Jarecki, H.~Krawczyk, and T.~Rabin, ``Adaptive security for threshold cryptosystems,'' in \emph{Proc. of {CRYPTO}}, 1999.

\bibitem{vassantlalCOBRADynamicProactive2022}
R.~Vassantlal, E.~Alchieri, B.~Ferreira, and A.~Bessani, ``{COBRA:} dynamic proactive secret sharing for confidential {BFT} services,'' in \emph{Proc. of {IEEE} {S\&P}}, 2022.

\bibitem{buterinNextgenerationSmartContract2014}
\BIBentryALTinterwordspacing
V.~Buterin, ``A next-generation smart contract and decentralized application platform,'' vol.~3, no.~37, pp. 2--1, 2014. [Online]. Available: \url{https://cryptorating.eu/whitepapers/Ethereum/Ethereum_white_paper.pdf}
\BIBentrySTDinterwordspacing

\bibitem{defilippiGovernanceBlockchainSystems2018}
P.~De~Filippi and G.~McMullen, ``Governance of blockchain systems: Governance of and by distributed infrastructure,'' Ph.D. dissertation, 2018.

\bibitem{bailleuAvocadoSecureInMemory2021}
M.~Bailleu, D.~Giantsidi, V.~Gavrielatos, D.~L. Quoc, V.~Nagarajan, and P.~Bhatotia, ``Avocado: {A} secure in-memory distributed storage system,'' in \emph{Proc. of {USENIX} {ATC}}, 2021.

\bibitem{pattukBigsecretSecureData2013}
E.~Pattuk, M.~Kantarcioglu, V.~Khadilkar, H.~Ulusoy, and S.~Mehrotra, ``Bigsecret: {A} secure data management framework for key-value stores,'' in \emph{Proc. of {IEEE} {CLOUD}}, 2013.

\bibitem{yuanBuildingEncryptedDistributed2016}
X.~Yuan, X.~Wang, C.~Wang, C.~Qian, and J.~Lin, ``Building an encrypted, distributed, and searchable key-value store,'' in \emph{Proc. of {ACM} AsiaCCS}, 2016.

\bibitem{mangipudiUncoveringImpactMental2023}
E.~V. Mangipudi, U.~Desai, M.~Minaei, M.~Mondal, and A.~Kate, ``Uncovering impact of mental models towards adoption of multi-device crypto-wallets,'' in \emph{Proc. of {ACM} {CCS}}, 2023.

\bibitem{yuDontPutAll2024}
Y.~Yu, T.~Sharma, S.~Das, and Y.~Wang, ``"don't put all your eggs in one basket": How cryptocurrency users choose and secure their wallets,'' in \emph{Proc. of {CHI}}, 2024.

\bibitem{liagourisSECRECYSecureCollaborative2023}
J.~Liagouris, V.~Kalavri, M.~Faisal, and M.~Varia, ``{SECRECY:} secure collaborative analytics in untrusted clouds,'' in \emph{Proc. of {USENIX} {NSDI}}, 2023.

\bibitem{corrigan-gibbsPrioPrivateRobust2017}
H.~Corrigan{-}Gibbs and D.~Boneh, ``Prio: Private, robust, and scalable computation of aggregate statistics,'' in \emph{Proc. of {USENIX} {NSDI}}, 2017.

\bibitem{volgushevDEMOIntegratingMPC2016}
N.~Volgushev, M.~Schwarzkopf, A.~Lapets, M.~Varia, and A.~Bestavros, ``{DEMO:} integrating {MPC} in big data workflows,'' in \emph{Proc. of {ACM} {CCS}}, 2016.

\bibitem{zhouShortcutMakingMPCbased2024}
P.~Zhou, X.~Guo, P.~Chen, T.~Li, S.~Lv, and Z.~Liu, ``Shortcut: Making mpc-based collaborative analytics efficient on dynamic databases,'' in \emph{Proc. of {ACM} {CCS}}, 2024.

\bibitem{yinHotStuffBFTConsensus2019}
M.~Yin, D.~Malkhi, M.~K. Reiter, G.~Golan{-}Gueta, and I.~Abraham, ``Hotstuff: {BFT} consensus with linearity and responsiveness,'' in \emph{Proc. of {ACM} {PODC}}, 2019.

\bibitem{bonehShortSignaturesWeil2001}
D.~Boneh, B.~Lynn, and H.~Shacham, ``Short signatures from the weil pairing,'' in \emph{Proc. of {ASIACRYPT}}, 2001.

\bibitem{kellerMPSPDZVersatileFramework2020}
M.~Keller, ``{MP-SPDZ:} {A} versatile framework for multi-party computation,'' in \emph{Proc. of {ACM} {CCS}}, 2020.

\bibitem{huntConfidentialComputingOpenPOWER2021}
G.~D.~H. Hunt, R.~Pai, M.~V. Le, H.~Jamjoom, S.~Bhattiprolu, R.~Boivie, L.~Dufour, B.~Frey, M.~Kapur, K.~A. Goldman, R.~Grimm, J.~Janakirman, J.~M. Ludden, P.~Mackerras, C.~May, E.~R. Palmer, B.~B. Rao, L.~Roy, W.~A. Starke, J.~Stuecheli, E.~Valdez, and W.~Voigt, ``Confidential computing for openpower,'' in \emph{Proc. of EuroSys}, 2021.

\bibitem{leeKeystoneOpenFramework2020}
D.~Lee, D.~Kohlbrenner, S.~Shinde, K.~Asanovic, and D.~Song, ``Keystone: an open framework for architecting trusted execution environments,'' in \emph{Proc. of EuroSys}, 2020.

\bibitem{costanSanctumMinimalHardware2016}
V.~Costan, I.~A. Lebedev, and S.~Devadas, ``Sanctum: Minimal hardware extensions for strong software isolation,'' in \emph{Proc. of {USENIX} Security}, 2016.

\bibitem{bahmaniCURESecurityArchitecture2021}
R.~Bahmani, F.~Brasser, G.~Dessouky, P.~Jauernig, M.~Klimmek, A.~Sadeghi, and E.~Stapf, ``{CURE:} {A} security architecture with customizable and resilient enclaves,'' in \emph{Proc. of {USENIX} Security}, 2021.

\bibitem{weiserTimbervTagisolatedMemory2019}
S.~Weiser, M.~Werner, F.~Brasser, M.~Malenko, S.~Mangard, and A.~Sadeghi, ``{TIMBER-V:} tag-isolated memory bringing fine-grained enclaves to {RISC-V},'' in \emph{Proc. of {NDSS}}, 2019.

\bibitem{xuControlledChannelAttacksDeterministic2015}
Y.~Xu, W.~Cui, and M.~Peinado, ``Controlled-channel attacks: Deterministic side channels for untrusted operating systems,'' in \emph{Proc. of {IEEE} {S\&P}}, 2015.

\bibitem{DBLP:conf/sp/MurdockOGBGP20}
K.~Murdock, D.~F. Oswald, F.~D. Garcia, J.~V. Bulck, D.~Gruss, and F.~Piessens, ``Plundervolt: Software-based fault injection attacks against intel {SGX},'' in \emph{Proc. of {IEEE} {S\&P}}, 2020.

\bibitem{DBLP:conf/uss/BorrelloKSLG022}
P.~Borrello, A.~Kogler, M.~Schwarzl, M.~Lipp, D.~Gruss, and M.~Schwarz, ``{\AE}pic leak: Architecturally leaking uninitialized data from the microarchitecture,'' in \emph{Proc. of {USENIX} Security}, 2022.

\bibitem{checkowayIagoAttacksWhy2013a}
S.~Checkoway and H.~Shacham, ``Iago attacks: why the system call {API} is a bad untrusted {RPC} interface,'' in \emph{Proc. of {ACM} {ASPLOS}}, 2013.

\bibitem{weichbrodtAsyncShockExploitingSynchronisation2016}
N.~Weichbrodt, A.~Kurmus, P.~R. Pietzuch, and R.~Kapitza, ``Asyncshock: Exploiting synchronisation bugs in intel {SGX} enclaves,'' in \emph{Proc. of {ESORICS}}, 2016.

\bibitem{shihTSGXEradicatingControlledChannel2017}
M.~Shih, S.~Lee, T.~Kim, and M.~Peinado, ``{T-SGX:} eradicating controlled-channel attacks against enclave programs,'' in \emph{Proc. of {NDSS}}, 2017.

\bibitem{wangLeakyCauldronDark2017}
W.~Wang, G.~Chen, X.~Pan, Y.~Zhang, X.~Wang, V.~Bindschaedler, H.~Tang, and C.~A. Gunter, ``Leaky cauldron on the dark land: Understanding memory side-channel hazards in {SGX},'' in \emph{Proc. of {ACM} {CCS}}, 2017.

\bibitem{shamirHowShareSecret1979}
A.~Shamir, ``How to share a secret,'' \emph{Commun. {ACM}}, vol.~22, no.~11, pp. 612--613, 1979.

\bibitem{desmedtSocietyGroupOriented1987}
Y.~Desmedt, ``Society and group oriented cryptography: {A} new concept,'' in \emph{Proc. of {CRYPTO}}, 1987.

\bibitem{zhouAPSSProactiveSecret2005}
L.~Zhou, F.~B. Schneider, and R.~van Renesse, ``{APSS:} proactive secret sharing in asynchronous systems,'' \emph{{ACM} Trans. Inf. Syst. Secur.}, vol.~8, no.~3, pp. 259--286, 2005.

\bibitem{schultzMPSSMobileProactive2010}
D.~A. Schultz, B.~Liskov, and M.~D. Liskov, ``{MPSS:} mobile proactive secret sharing,'' \emph{{ACM} Trans. Inf. Syst. Secur.}, vol.~13, no.~4, pp. 34:1--34:32, 2010.

\bibitem{baronCommunicationoptimalProactiveSecret2015}
J.~Baron, K.~E. Defrawy, J.~Lampkins, and R.~Ostrovsky, ``Communication-optimal proactive secret sharing for dynamic groups,'' in \emph{Proc. of {ACNS}}, 2015.

\bibitem{maramCHURPDynamiccommitteeProactive2019}
S.~K.~D. Maram, F.~Zhang, L.~Wang, A.~Low, Y.~Zhang, A.~Juels, and D.~Song, ``{CHURP:} dynamic-committee proactive secret sharing,'' in \emph{Proc. of {ACM} {CCS}}, 2019.

\bibitem{yaoProtocolsSecureComputations1982}
A.~C. Yao, ``Protocols for secure computations (extended abstract),'' in \emph{Proc. of {FOCS}}, 1982.

\bibitem{lindellSecureMultipartyComputation2021}
Y.~Lindell, ``Secure multiparty computation,'' \emph{Commun. {ACM}}, vol.~64, no.~1, pp. 86--96, 2021.

\bibitem{escuderoIntroductionSecretSharingBasedSecure2022}
D.~Escudero, ``An introduction to secret-sharing-based secure multiparty computation,'' \emph{{IACR} Cryptol. ePrint Arch.}, p.~62, 2022.

\bibitem{castroPracticalByzantineFault2002}
M.~Castro and B.~Liskov, ``Practical byzantine fault tolerance and proactive recovery,'' \emph{{ACM} Trans. Comput. Syst.}, vol.~20, no.~4, pp. 398--461, 2002.

\bibitem{TrustedComputingBase}
\BIBentryALTinterwordspacing
``Trusted computing base recovery,'' Intel, 2025. [Online]. Available: \url{https://www.intel.com/content/www/us/en/developer/articles/technical/software-security-guidance/best-practices/trusted-computing-base-recovery.html}
\BIBentrySTDinterwordspacing

\bibitem{AffectedProcessorsTransient}
\BIBentryALTinterwordspacing
``Affected processors: Transient execution attacks \& related security,'' Intel, 2025. [Online]. Available: \url{https://www.intel.com/content/www/us/en/developer/topic-technology/software-security-guidance/processors-affected-consolidated-product-cpu-model.html}
\BIBentrySTDinterwordspacing

\bibitem{AMDSEVConfidential}
\BIBentryALTinterwordspacing
``Amd sev confidential computing vulnerability,'' AMD, 2025. [Online]. Available: \url{https://www.amd.com/en/resources/product-security/bulletin/amd-sb-3019.html}
\BIBentrySTDinterwordspacing

\bibitem{Ladon2025}
H.~Lyu, S.~Xie, J.~Niu, C.~Feng, Y.~Zhang, and I.~Beschastnikh, ``Ladon: High-performance multi-bft consensus via dynamic global ordering,'' in \emph{Proc. of EuroSys}, 2025.

\bibitem{fast-hotstuff}
M.~M. Jalalzai, J.~Niu, C.~Feng, and F.~Gai, ``Fast-hotstuff: {A} fast and robust {BFT} protocol for blockchains,'' \emph{{IEEE} Trans. Dependable Secur. Comput.}, vol.~21, no.~4, pp. 2478--2493, 2024.

\bibitem{feldmanPracticalSchemeNoninteractive1987}
P.~Feldman, ``A practical scheme for non-interactive verifiable secret sharing,'' in \emph{Proc. of {FOCS}}, 1987.

\bibitem{chenMAGEMutualAttestation2022}
G.~Chen and Y.~Zhang, ``{MAGE:} mutual attestation for a group of enclaves without trusted third parties,'' in \emph{Proc. {USENIX} Security}, 2022.

\bibitem{lindellFrameworkConstructingFast2017a}
Y.~Lindell and A.~Nof, ``A framework for constructing fast {MPC} over arithmetic circuits with malicious adversaries and an honest-majority,'' in \emph{Proc. of {ACM} {CCS}}, 2017.

\bibitem{HotStuffLibhotstuff2025}
``Hot-stuff/libhotstuff,'' \url{https://github.com/hot-stuff/libhotstuff}, Jan. 2018.

\bibitem{IntelTDXConnect}
``Intel{\textregistered} tdx connect architecture specification,'' https://www.intel.com/content/www/us/en/content-details/773614/intel-tdx-connect-architecture-specification.html, 2023.

\bibitem{sousaByzantineConsensusBFT2012}
J.~Sousa and A.~N. Bessani, ``From byzantine consensus to {BFT} state machine replication: {A} latency-optimal transformation,'' in \emph{Proc. of IEEE EDCC}, 2012.

\bibitem{AWSKeyManagement}
\BIBentryALTinterwordspacing
(2026) Aws key management service documentation. [Online]. Available: \url{https://docs.aws.amazon.com/kms/}
\BIBentrySTDinterwordspacing

\bibitem{msmbaldwinAzureKeyVaulta}
\BIBentryALTinterwordspacing
(2026) Azure key vault overview - azure key vault. [Online]. Available: \url{https://learn.microsoft.com/en-us/azure/key-vault/general/overview}
\BIBentrySTDinterwordspacing

\bibitem{SafeWallet}
\BIBentryALTinterwordspacing
(2026) Safe\{Wallet\}. [Online]. Available: \url{https://safe.global/}
\BIBentrySTDinterwordspacing

\bibitem{fireblocks}
\BIBentryALTinterwordspacing
(2026) Fireblocks. [Online]. Available: \url{https://www.fireblocks.com/}
\BIBentrySTDinterwordspacing

\bibitem{DBLP:conf/asiacrypt/LepointPRST21}
T.~Lepoint, S.~Patel, M.~Raykova, K.~Seth, and N.~Trieu, ``Private join and compute from {PIR} with default,'' in \emph{Proc. of {ASIACRYPT}}, 2021.

\bibitem{mateticROTERollbackProtection2017}
S.~Matetic, M.~Ahmed, K.~Kostiainen, A.~Dhar, D.~M. Sommer, A.~Gervais, A.~Juels, and S.~Capkun, ``{ROTE:} rollback protection for trusted execution,'' in \emph{Proc. of {USENIX} Security}, 2017.

\bibitem{niuNARRATORSecurePractical2022}
J.~Niu, W.~Peng, X.~Zhang, and Y.~Zhang, ``{NARRATOR:} secure and practical state continuity for trusted execution in the cloud,'' in \emph{Proc. of {ACM} {CCS}}, 2022.

\bibitem{narrator-pro}
W.~Peng, X.~Li, J.~Niu, X.~Zhang, and Y.~Zhang, ``Ensuring state continuity for confidential computing: {A} blockchain-based approach,'' \emph{{IEEE} Trans. Dependable Secur. Comput.}, vol.~21, no.~6, pp. 5635--5649, 2024.

\bibitem{tiks}
W.~Wang, J.~Niu, M.~K. Reiter, and Y.~Zhang, ``Formally verifying a rollback-prevention protocol for tees,'' in \emph{Proc. of FORTE}, 2024.

\bibitem{angelNimbleRollbackProtection2023}
S.~Angel, A.~Basu, W.~Cui, T.~Jaeger, S.~Lau, S.~T.~V. Setty, and S.~Singanamalla, ``Nimble: Rollback protection for confidential cloud services,'' in \emph{Proc. of {USENIX} {OSDI}}, 2023.

\bibitem{dinisRRFaultModel2023}
B.~Dinis, P.~Druschel, and R.~Rodrigues, ``{RR:} {A} fault model for efficient {TEE} replication,'' in \emph{Proc. of {NDSS}}, 2023.

\bibitem{Achilles}
J.~Niu, X.~Wen, G.~Wu, S.~Liu, J.~Yu, and Y.~Zhang, ``Achilles: Efficient tee-assisted {BFT} consensus via rollback resilient recovery,'' in \emph{Proc. of EuroSys}, 2025.

\bibitem{wangENGRAFTEnclaveguardedRaft2022}
W.~Wang, S.~Deng, J.~Niu, M.~K. Reiter, and Y.~Zhang, ``{ENGRAFT:} enclave-guarded raft on byzantine faulty nodes,'' in \emph{Proc. of {ACM} {CCS}}, 2022.

\bibitem{DBLP:journals/pvldb/HowardAACCCDFJK23}
H.~Howard, F.~Alder, E.~Ashton, A.~Chamayou, S.~Clebsch, M.~Costa, A.~Delignat{-}Lavaud, C.~Fournet, A.~Jeffery, M.~Kerner, F.~Kounelis, M.~A. Kuppe, J.~Maffre, M.~Russinovich, and C.~M. Wintersteiger, ``Confidential consortium framework: Secure multiparty applications with confidentiality, integrity, and high availability,'' \emph{Proc. {VLDB} Endow.}, vol.~17, no.~2, pp. 225--240, 2023.

\bibitem{riazi2018chameleon}
M.~S. Riazi, C.~Weinert, O.~Tkachenko, E.~M. Songhori, T.~Schneider, and F.~Koushanfar, ``Chameleon: {A} hybrid secure computation framework for machine learning applications,'' in \emph{Proc. of {ACM} {AsiaCCS}}, 2018.

\bibitem{luCorrelatedRandomnessTeleportation2021}
Y.~Lu, B.~Zhang, H.~Zhou, W.~Liu, L.~Zhang, and K.~Ren, ``Correlated randomness teleportation via semi-trusted hardware - enabling silent multi-party computation,'' in \emph{Proc. of {ESORICS}}, 2021.

\bibitem{wuHybridTrustMultiparty2022}
P.~Wu, J.~Ning, J.~Shen, H.~Wang, and E.~Chang, ``Hybrid trust multi-party computation with trusted execution environment,'' in \emph{Proc. of {NDSS}}, 2022.

\bibitem{huEfficientPracticalMultiparty2025}
X.~Hu, R.~Li, Y.~Liu, and Q.~Wang, ``Towards efficient and practical multi-party computation under inconsistent trust in tees,'' in \emph{Proc. of {IEEE} {S\&P}}, 2025.

\bibitem{DeploymentDilemmaMerits2023}
\BIBentryALTinterwordspacing
``The deployment dilemma: Merits \& challenges of deploying mpc,'' 2023. [Online]. Available: \url{https://mpc.cs.berkeley.edu/blog/deployment-dilemma.html}
\BIBentrySTDinterwordspacing

\bibitem{goyalUnconditionalCommunicationEfficientMPC2021}
V.~Goyal, A.~Polychroniadou, and Y.~Song, ``Unconditional communication-efficient {MPC} via hall's marriage theorem,'' in \emph{Proc. of {CRYPTO}}, 2021.

\bibitem{beckScalableMultipartyGarbling2023}
G.~Beck, A.~Goel, A.~Hegde, A.~Jain, Z.~Jin, and G.~Kaptchuk, ``Scalable multiparty garbling,'' in \emph{Proc. of {ACM} {CCS}}, 2023.

\bibitem{googleconfidentialspace}
\BIBentryALTinterwordspacing
(2023) How confidential space and mpc can help secure digital assets. [Online]. Available: \url{https://cloud.google.com/blog/products/identity-security/how-confidential-space-and-mpc-can-help-secure-digital-assets}
\BIBentrySTDinterwordspacing

\bibitem{DBLP:conf/uss/VitiSGD0BH0D25}
R.~D. Viti, I.~Sheff, N.~Glaeser, B.~Dinis, R.~Rodrigues, B.~Bhattacharjee, A.~Hithnawi, D.~Garg, and P.~Druschel, ``Covault: Secure, scalable analytics of personal data,'' in \emph{Proc. of {USENIX} Security}, 2025.

\end{thebibliography}

\appendix
\balance
\subsection{Complementarity of MPC and Heterogeneous TEEs} \label{appen:mpc}
Deploying MPC in real systems faces a well-known deployment dilemma between security and performance~\cite{DeploymentDilemmaMerits2023}. On the one hand, MPC requires large and diverse committees to achieve strong security; small or closely related parties undermine the independence that secret sharing relies on. On the other hand, MPC communication and coordination costs grow with both the number of parties and the number of interactive multiplications, making large committees expensive in bandwidth and latency~\cite{goyalUnconditionalCommunicationEfficientMPC2021, beckScalableMultipartyGarbling2023}.

MPC and TEEs address different parts of the problem. MPC protects secrets by distributing them across parties and computing over shares, but it does not decide which machines should remain eligible to hold shares as platforms are patched, revoked, or replaced. TEEs provide attested execution and platform-state evidence, but a compromised enclave may expose its local plaintext state, so TEE protection alone does not invalidate previously leaked shares. \sysname combines these mechanisms at the system layer: heterogeneous TEEs provide concrete trust domains and attestation-derived eligibility, while DPSS/MPC maintains confidentiality and computation over distributed shares. When evidence changes the committee, \sysname couples the certified configuration transition with refresh, resharing, or recovery, preventing past compromise from accumulating across configurations. Previously related designs already appear in secure recovery, digital-asset custody, threshold signing/wallet infrastructure, and privacy-preserving analytics (e.g., Signal SVR3~\cite{connellSecretKeyRecovery2024}, Confidential Space + MPC~\cite{googleconfidentialspace}, Fireblocks~\cite{fireblocks}, CoVault~\cite{DBLP:conf/uss/VitiSGD0BH0D25}).

\subsection{Performance of CSV}
\label{appen:csv_expriment}
To further evaluate the generality and performance of \sysname, we conducted additional experiments on the CSV platform. The experimental setup and evaluation methodology are consistent with those described in ~\secref{experiment_setup}.

\begin{table}[ht]
    \centering
    \setlength{\abovecaptionskip}{2pt}
    \renewcommand{\arraystretch}{0.95}
    \setlength{\tabcolsep}{4pt}
    \caption{Recovery and reshare latency (s) with 1K entries.}
    \label{tab:recovery_reshare_latency_csv}

    \begin{tabular*}{\columnwidth}{@{\extracolsep{\fill}}lccccc@{}}
        \toprule
        Item & Metric & $n=4$ & $n=10$ & $n=16$ & $n=25$ \\
        \midrule
        \multirow{2}{*}{CSV-Recovery}
            & Poly. Gen.   & 1.826 & 4.006 & 7.828  & 16.918 \\
            & State Recon. & 0.307 & 0.545 & 0.914  & 1.381 \\
        \midrule
        \multirow{2}{*}{CSV-Reshare}
            & Poly. Gen.   & 3.274 & 9.288 & 17.257 & 36.018 \\
            & State Recon. & 0.804 & 1.478 & 2.259  & 4.287 \\
        \bottomrule
    \end{tabular*}

    \vspace{-2mm}
\end{table}

\tabref{tab:recovery_reshare_latency_csv} reports the latency of \sysnameCSV during reconfiguration and fault node recovery protocols. The results are consistent with previous findings: at $n=4$, \sysnameCSV achieves lower latency than COBRA. However, as the number of nodes increases, the TEE overhead causes \sysnameCSV's latency to gradually surpass that of COBRA.

\tabref{tab:csv-performance} summarizes the maximum throughput and corresponding latency of \sysnameCSV. \sysnameCSV achieves higher throughput than COBRA for when $n < 25$, but its throughput decreases more rapidly as the number of nodes increases. At $n=25$, the write throughput of \sysnameCSV becomes comparable to COBRA.

\begin{table}[ht]
  \centering
  \caption{Performance of \sysnameCSV.}
  \label{tab:csv-performance}
  \begin{tabular}{l|ccc|ccc}
    \toprule
    Item &
    \multicolumn{3}{c|}{Throughput (kTPS)} &
    \multicolumn{3}{c}{Latency (ms)} \\
    &
    $n = 4$ & $n = 10$ & $n = 25$ &
    $n = 4$ & $n = 10$ & $n = 25$ \\
    \midrule
    Write & 22.5 & 15.6 & 8.4 & 64.1 & 97.9 & 190.8 \\
    Read & 120.1 & 40.5 & 9.1  & 3.6 & 5.1 & 11.0 \\
    \bottomrule
  \end{tabular}
\end{table}

\end{document}